\newcommand{\E}{\mathbb{E}}
\newcommand{\Id}{\mathrm{d}}
\newcommand{\R}{\mathbb{R}}
\newcommand\cyr{%
\renewcommand\rmdefault{wncyr}%
\renewcommand\sfdefault{wncyss}%
\renewcommand\encodingdefault{OT2}%
\normalfont
\selectfont}
\DeclareTextFontCommand{\textcyr}{\cyr}
\theoremstyle{plain}
\newtheorem{theorem}{Theorem}
\newtheorem{proposition}{Proposition}
\theoremstyle{definition}
\newtheorem{definition}{Definition}
\newtheorem{assumption}{Assumption}
\theoremstyle{remark}
\begin{document}
\title{The Valuation of Clean Spread Options:  Linking Electricity, Emissions and Fuels}

\author{Ren\'{e} Carmona}
\address{Bendheim Center for Finance\\
Dept. ORFE, University of Princeton\\Princeton NJ 08544, USA}
\email{rcarmona@princeton.edu}

\author{Michael Coulon}
\thanks{Partially supported by NSF - DMS-0739195}
\address{ORFE\\University of Princeton\\Princeton NJ 08544, USA}
\email{mcoulon@princeton.edu}

\author{Daniel Schwarz}
\address{Oxford-Man Institute\\University of Oxford\\Oxford, UK}
\email{schwarz@maths.ox.ac.uk}

\maketitle

\begin{abstract}
The purpose of the paper is to present a new  pricing method for clean spread options, and to illustrate its main features on a set
of numerical examples produced by a dedicated computer code. The novelty of the approach is embedded in the use of structural
models as opposed to reduced-form models which fail to capture properly the fundamental dependencies between the economic factors entering the production process. 
\end{abstract}

\section{Introduction}
Spread options are most often used in the commodity and energy markets to encapsulate the profitability of a production process
by comparing the price of a refined product to the costs of production including, but not limited to, the prices of the inputs to
the production process. When the output commodity is electric power, such spread options are called \emph{spark spreads} when the electricity is produced from natural gas, and \emph{dark spreads} when the electricity is produced from coal. Both processes are the sources of GreenHouse Gas (GHG) emissions, in higher quantities for the latter than the former. 
In this paper we concentrate on the production of electricity and CO${}_2$ emissions and the resulting dependence structure between prices.

Market mechanisms aimed at controlling CO${}_2$ emissions have been implemented throughout the world, and whether they are mandatory or voluntary, cap-and-trade schemes have helped to put a price on carbon in the US and in Europe.
In the academic literature, equilibrium models have been used to show what practitioners have known all along, namely that the price put on CO${}_2$ by the regulation should be included in the costs of production to set the price of electricity. (cf. \cite{CarmonaFehrHinzPorchet}) Strings of spark spread options (options on the spread between the price of 1MWh of electricity and the cost of the amount of natural gas needed to produce such a MWh) with maturities covering a given period are most frequently used to value the optionality of a gas power plant which can be run when it is profitable to do so (namely when the price of electricity is greater than the cost of producing it), and shut down otherwise. In a nutshell, if an economic agent takes control on day $t$, of a gas power plant for a period $[T_1,T_2]$, then for every day
$\tau\in[T_1,T_2]$ of this period, he or she can decide to run the power plant when $P_\tau>h_gS^g_\tau+K$ and book a profit $ P_\tau-h_gS^g_\tau-K$ for each unit of power produced, and shut the plant down if $P_\tau\le h_gS^g_\tau+K$. Moreover, ignoring constraints such as ramp-up rates and start-up costs, this scheduling is automatically induced when generators bid at the level of their production costs in the day-ahead auction for power. Here $P_\tau$ denotes the price at which one unit (1 MWh) of power can be sold on day $\tau$, $S^g_\tau$ the price of one unit of natural gas (typically one MMBtu), $h_g$ the efficiency or heat rate of the plant (i.e. the number of units of natural gas needed to produce one unit of electricity) and $K$ the daily (fixed) costs of operations and maintenance
of the plant. So in this somewhat oversimplified analysis of the optionality of the plant, the value at time $t$ of the control of the plant operation on day $\tau$ can be expressed as 
$e^{-r(\tau-t)}\E[(P_\tau - h_gS^g_\tau-K)^+|\mathcal{F}_t]$ where as usual, the exponent ${}^+$ stands for the positive part, i.e. $x^+=x$ when $x\ge 0$ and $x^+=0$ otherwise, $r$ for the constant interest
rate used as discount factor to compute present values of future cash flows, and $\mathcal{F}_t$ denotes the information available on day $t$ when the conditional expectation is actually computed. So the operational control (for example as afforded by a tolling contract) of the plant over the period $[T_1,T_2]$ could be valued on day $t$ as
$$ 
V^{PP}_t=\sum_{\tau=T_1}^{T_2}e^{-r(\tau-t)}\E[(P_\tau - h_gS^g_\tau-K)^+|\mathcal{F}_t].
$$
This rather simplistic way of valuing a power generation asset in the spirit of the theory of real options, had far-reaching implications in the developments of the energy markets, and is the main reason why spread options are of the utmost importance. However, such a valuation procedure is flawed in the presence of emission regulation as the costs of production have to include the costs specific to the regulation. To be more specific, the day-$\tau$ potential profit $(P_\tau - h_gS^g_\tau-K)^+$ of the spark spread has to be modified to $(P_\tau - h_gS^g_\tau- e_gA_\tau-K)^+$ in order to accommodate the cost of the regulation. Here $A_\tau$ is the price of one allowance certificate worth one ton of CO${}_2$ equivalent, and $e_g$ is the emission coefficient of the plant, namely the number of tons of CO${}_2$ emitted by the plant during the production of one unit of electricity. Such a spread is often called a \emph{clean spread} to emphasize the fact the externality is being paid for, and the real option approach to power plant valuation leads to the following \emph{clean price}
$$
V^{CPP}_t=\sum_{\tau=T_1}^{T_2}e^{-r(\tau-t)}\E[(P_\tau - h_gS^g_\tau- e_gA_\tau-K)^+|\mathcal{F}_t]
$$
for the control of the plant over the period $[T_1,T_2]$ in the presence of the regulation.

In order to price such cross-commodity derivatives, a joint model is clearly required for fuel prices, electricity prices and carbon allowance prices.  Various studies have analyzed the strong links between these price series (cf. \cite{deJongSchneider,pKoenig2011}).  Many reduced-form price models have been proposed for electricity (cf. \cite{fBenth2008,aEydeland2003} for examples) with a focus on capturing its stylized features such as seasonality, high volatility, spikes, mean-reversion and fuel price correlation.  On the other hand, many authors have argued that these same features are better captured via a structural approach, modelling the dynamics of underlying factors such as demand (load), capacity and fuel prices (early examples include \cite{mBarlow2002,aCartea2008,cPirrong2008,mCoulon2009a}).

Similarly, for carbon emission allowances, exogenously specified processes that model prices directly have been proposed by some (cf. \cite{CarmonaHinz}). Others have instead treated the emissions process as the exogenously specified underlying factor; in this case the allowance certificate becomes a derivative on cumulative emissions (cf. \cite{jSeifert2008,ChesneyTaschini}). However, these models do not take into account the important feedback from the allowance price to the rate at which emissions are produced in the electricity sector --- a feature, which is crucial for the justification of any implementation of a cap-and-trade scheme. In a discrete-time framework this feedback mechanism has been addressed, for example in \cite{mCoulon2009,CarmonaFehrHinzPorchet}. In continuous-time the problem has been treated in \cite{CDET} and \cite{sHowison2012}, whereby the former models the accumulation of emissions as a function of an exogenously specified electricity price process, while the latter uses the bid-stack mechanism to infer the emissions rate.

The literature on spread options is extensive. In industry, Margrabe's classical spread option formula (cf. \cite{wMargrabe1978}) is still widely used, and has been extended by various authors (see \cite{CarmonaDurrleman} for an overview) including to the three commodity case, as required for the pricing of clean spreads (cf. \cite{eAlos2011}). \cite{CarmonaSun} analyse the pricing of two-asset spread options in a multiscale stochastic volatility model. For electricity markets, pricing formulae for dirty spreads based on structural models have been proposed in \cite{rCarmona2012}, in which a closed-form formula is derived in the case of $K=0$, and in \cite{rAid2012}, in which semi-closed form formulae are derived for $K\neq0$ at the expense of a fixed merit order.

The original contributions of the paper are twofold. First, we express the value of clean spread options in a formulation where demand for power and fuel prices are the only factors whose stochastic dynamics are given exogenously, and where the prices of power and emission allowances are derived from a bid-stack based structural model and a forward backward stochastic differential system respectively. The second contribution is the development of a numerical code for the computation of the solution of 
the pricing problem. First we solve a 4+1 dimensional semilinear partial differential equation to compute the price of an emission allowance, and then we use Monte Carlo techniques to compute the price of the spread option. These computational tools are used to produce the numerical results of case studies presented in \textsection \ref{str:num_analysis} of the paper for the purpose of illustrating the impact of a carbon regulation on the price of spread options. In this section we first compare the price of spark and dark spread options in two different markets, one with no emissions regulation in place and the other governed by an increasingly strict cap-and-trade system. Second, we analyze the impact that different merit order scenarios have on the option prices. Third, we demonstrate the difference between the structural and the reduced-form approach by comparing the option prices produced by our model with those produced by two key candidate reduced-form models. Fourth and last, we contrast two competing policy instruments: cap-and-trade, represented by the model we propose, and a fixed carbon tax.

\section{The Bid Stack: Price Setting in Electricity Markets}
In order to capture the dependency of electricity price on production costs and fundamental factors in a realistic manner, we use a structural model in the spirit of those reviewed in the recent survey of \cite{CarmonaCoulon}. The premises of structural models for electricity prices depend upon an explicit construction of the supply curve. Since electricity is sold at its marginal cost, the electricity spot price is given by evaluation of the supply function for the appropriate values of factors used to describe the costs of production in the model.

In practice, electricity producers submit day-ahead bids to a central market operator, whose task it is to allocate the production of electricity amongst them. Typically, firms' bids have the form of price-quantity pairs, with each pair comprising the amount of electricity the firm is willing to produce, and the price at which the firm is willing to sell this quantity. Given the large number of generators in most markets, it is common in structural models to approximate the resulting step function of market bids by a continuous increasing curve. Firms' bid levels are determined by their costs of production. An important feature of our model, distinguishing it from most of the commonly used structural models is to include, as part of the production costs, the costs incurred because of the existence of an emissions regulation. 

\vskip 2pt
We assume that, when deciding which firms to call upon to produce electricity, the market operator adheres to the merit order, a rule by which cheaper production units are called upon before more expensive ones. For simplicity, operational and transmission constraints are not considered.

\begin{assumption}
The market operator arranges bids according to the \textbf{merit order}, in increasing order of production costs.
\end{assumption}

The map resulting from ordering market supply in increasing order of electricity costs of production is what is called the bid stack. As it is one of the important building blocks of our model, we define it in a formal way for later convenience.

\begin{definition}
The \textbf{bid stack} is given by a measurable function
\begin{equation*}
 b:\left[0,\bar{x}\right]\times\R\times\R^n \ni (x,a,s) \hookrightarrow b(x,a,s) \in\R,
\end{equation*}
with the property that for each fixed $(a,s)\in\R\times\R^n$, the function $[0,\bar{x}]\ni x\hookrightarrow b(x,a,s)$ is strictly increasing.
\end{definition}
In this definition,  $\bar{x}\in \R_{++}$ represents the \textit{market capacity} (measured in MWh) and the variable $x$ the \textit{supply of electricity}. The integer $n \in \mathbb{N}\setminus \{0\}$ gives the number of economic factors (typically the prices  in \euro\  of the fuels used in the production of electricity), and $s\in\R^n$ the numeric values of these factors.
Here and throughout the rest of the paper the \textit{cost of carbon emissions} (measured in \euro\ per metric ton of CO$_2$) is denoted by $a$. So for a given allowance price, say $a$, and fuel prices, say $s$, the market is able to supply $x$ units of electricity at price level $b=b(x,a,s)$ (measured in \euro\ per MWh).  In other words, $b(x,a,s)$ represents the bid level of the marginal production unit in the event that demand equals $x$.

The choice of a function $b$ which captures the subtle dependence of the electricity price upon the level of supply and the production costs, is far from trivial, and different approaches have been considered in the literature, as reviewed recently by \cite{CarmonaCoulon}.  In \textsection \ref{str:num_bid_stack} we extend the model proposed in \cite{rCarmona2012} to include the cost of carbon as part of the variable costs driving bid levels.  

\section{Risk-Neutral Pricing of Allowance Certificates}
As the inclusion of the cost of emission regulation in the valuation of spread options is the main thrust of the paper, we explain how emission allowances are priced in our model. The model we introduce is close to \cite{sHowison2012}. However we extend the results found therein to allow the equilibrium bids of generators to be stochastic and driven by fuel prices, a generalization that is vital for our purpose.

We suppose that carbon emissions in the economy are subject to cap-and-trade regulation structured as follows: at the end of the compliance period, each registered firm needs to offset its cumulative emissions with emission allowances or incur a penalty for each excess ton of CO${}_2$ not covered by a redeemed allowance certificate. Initially, firms acquire allowance certificates through free allocation, e.g. through National Allocation Plans (NAP) like in the initial phase of the European Union (EU) Emissions Trading Scheme (ETS), or by purchasing them at auctions like in the Regional Greenhouse Gas Initiative (RGGI) in the North East of the US.
Allowances change hands throughout the compliance period. Typically, a firm which thinks that its initial endowment will not suffice to cover its emissions will buy allowances, while firms expecting a surplus will sell them. Adding to these \emph{naturals}, speculators enter the market providing liquidity. Allowances are typically traded in the form of forward contracts and options. In this paper, we denote by $A_t$ the spot price of an allowance certificate maturing at the end of the compliance period. Because their cost of carry is negligible, we treat them as financial products  liquidly traded in a market without frictions, and in which long and short positions can be taken. 

In a competitive equilibrium, the level of cumulative emissions relative to the cap (i.e. the number of allowance certificates issued by the regulation authority) determines whether --- at the end of the compliance period --- firms will be subjected to a penalty payment and create a demand for allowance certificates. See \cite{CarmonaFehrHinzPorchet} for details. For this reason, allowance certificates should be regarded as derivatives on the emissions accumulated throughout the trading period. This type of option written on a non-tradable underlying interest is rather frequent in the energy markets: temperature options are a case in point.

\subsection{The Market Emissions Rate}
As evidenced by the above discussion, the rate at which CO${}_2$ is emitted in the atmosphere as a result of electricity production has to be another important building block of our model. Clearly at any given time,  this rate is a function of the amount of electricity produced  and because of their impact on the merit order, the variable costs of production, including fuel prices, and notably, the carbon allowance price itself.

\renewcommand{\theenumi}{D\thedefinition.\arabic{enumi}}
\begin{definition}\label{def:market_ems}
 The \textbf{market emissions rate} is given by a bounded function
\begin{equation*}
 \mu_e:\left[0,\bar{x}\right]\times\R\times\R^n \ni (x,a,s) \hookrightarrow \mu_e(x,a,s) \in\R_{+},
\end{equation*}
which is Lipschitz continuous in its three variables, strictly increasing in $x$ when $a$ and $s$ are held fixed, and strictly decreasing in $a$ when $x$ and $s$ are fixed.  

\end{definition}

With the definition above, for a given level of electricity supply and for given allowance and fuel prices, $\mu_e=\mu_e(x,a,s)$ represents the rate at which the market emits, measured in tons of CO$_2$ per hour. Cumulative emissions are then computed by integrating the market emissions rate over time.
The monotonicity property in $x$ makes sense since any increase in supply can only increase the emissions rate. Similarly, as the cost of carbon increases the variable costs (and hence the bids) of pollution intensive generators increase by more than those of environmentally friendlier ones. Dirtier technologies become relatively more expensive and are likely to be scheduled further down in the merit order. As a result cleaner technologies are brought online earlier, hence the monotonicity in $a$.

In \textsection \ref{str:num_ems_stack} we propose a specific functional form for $\mu_e$ consistent with the bid stack model introduced in \textsection\ref{str:num_bid_stack}.  

\subsection{The Pricing Problem}
We shall use the following notation. For a fixed time horizon $T \in \R_+$, let $(W^0_t,W_t)_{t\in[0,T]}$ be a  $(n+1)$-dimensional standard Wiener process on a probability space $(\Omega, \mathcal{F},\mathbb{P})$, $\mathcal{F}^0:=(\mathcal{F}^0_t)$ the filtration generated by $W^0$, $\mathcal{F}^W:=(\mathcal{F}^W_t)$  the filtration generated by $W$, and $\mathcal{F}:=\mathcal{F}^0\vee\mathcal{F}^W$ the market filtration. All relationships between random variables are to be understood in the almost surely sense.

Consumers' \textit{demand for electricity} is given by an $\mathcal{F}^0_t$-adapted stochastic process $(D_t)$. In response to this demand producers supply electricity, and we assume that demand and supply are always in equilibrium, so that at any time $t\in[0,T]$ an amount $D_t$ of electricity is supplied. The \textit{prices of fuels} are observed $\mathcal{F}^W_t$-adapted stochastic processes $(S_t)_{t\in[0,T]}$, where $S_t:=(S^1_t,\ldots,S^n_t)$. If the price of an \textit{allowance certificate} at time $t$, say $A_t$, becomes available,  as we will see in \textsection \ref{str:allowance_FBSDE}, $(A_t)_{T\in[0,T]}$ will be constructed as a $\mathcal{F}_t$-adapted stochastic process solving a Forward Backward Stochastic Differential Equation (FBSDE). The rate of emission $\mu_e(D_t,A_t,S_t)$ can then be evaluated and the cumulative emissions computed by integration over time, resulting in a $\mathcal{F}_t$-adapted process $(E_t)_{t\in[0,T]}$.

In order to avoid the difficulties of estimating the market price of risk (see for example \cite{aEydeland2003} for a discussion of some possible ways to approach this thorny issue), we choose to specify the dynamics of the processes  $(D_t)_{t\in[0,T]}$ and  $(S_t)_{t\in[0,T]}$ under a risk neutral
measure $\mathbb{Q} \sim \mathbb{P}$ chosen by the market for pricing purposes.

\subsection{An FBSDE for the Allowance Price}\label{str:allowance_FBSDE}
We assume that at time $t=0$, demand for electricity is known. Thereafter, it evolves according to an It\^{o} diffusion. Specifically, for $t\in[0,T]$, demand for electricity $D_t$ is the unique strong solution of a stochastic differential equation of the form
\begin{equation}\label{eq:demand_process}
 \Id D_t = \mu_d(t,D_t) \Id t + \sigma_d(D_t) \Id \tilde{W}^0_t, \qquad D_0=d_0 \in (0,\bar{x}),
\end{equation}
where $(\tilde{W}_t)$ is an $\mathcal{F}_t$-adapted $\mathbb{Q}$-Brownian motion. The time dependence of the drift allows us to capture the seasonality observed in electricity demand.

Similarly to demand, the prices of the fuels used in the production processes satisfy a system of stochastic differential equations written in a vector form as follows:
\begin{equation}\label{eq:fuel_price_process}
 \Id S_t = \mu_s(S_t) \Id t + \sigma_s(S_t) \Id \tilde{W}_t, \qquad S_0=s_0 \in \R^n,\; t\in[0,T].
\end{equation}
Cumulative emissions are measured from the beginning of the compliance period when time $t=0$, so that $E_0=0$. Subsequently, they are determined by integrating over the market emissions rate $\mu_e$ introduced in Definition \ref{def:market_ems}. So assuming that the price $A_t$ of an allowance certificate is knwon, the cumulative emissions process is represented by a bounded variation process; i.e. for $t\in[0,T]$,
\begin{equation}\label{eq:cumulative_emissions_process}
 \Id E_t = \mu_e(D_t,A_t,S_t) \Id t, \quad E_0=0.
\end{equation}
Note that with this definition the process $(E_t)$ is non-decreasing, which makes intuitive sense considering that it represents a cumulative quantity.

To complete the formulation of the pricing model, it remains to characterize the allowance certificate price process $(A_t)_{t\in[0,T]}$. If our model is to apply to a one compliance period scheme, in a competitive equilibrium, at the end of the compliance period $t=T$, its value is given by a deterministic function of the cumulative emissions:
\begin{equation}\label{eq:allowance_terminal_cnd}
A_T = \phi(E_T),
\end{equation}
where $\phi:\R \hookrightarrow \R$ is bounded, measurable and non-decreasing. Usually $\phi(\cdot):=\pi \mathbb{I}_{[\Gamma,\infty)}(\cdot)$, where $\pi\in\R_+$ denotes the penalty paid in the event of non-compliance and $\Gamma\in \R_+$ the cap chosen by the regulator as the aggregate allocation of certificates. See \cite{CarmonaFehrHinzPorchet} for details.
Since the discounted allowance price is a martingale under $\mathbb{Q}$, it is equal to the conditional expectation of its terminal value, i.e.
\begin{equation}\label{eq:allowance_risk_neutral_expectation}
 A_t=\exp\left(r(T-t)\right)\E^\mathbb{Q}\left[\left.\phi(E_T)\right|\mathcal{F}_t\right], \quad \text{for } t\in[0,T],
\end{equation}
which implies in particular that the allowance price process $(A_t)$ is bounded.
Since the filtration $(\mathcal{F}_t)$ is being generated by the Wiener processes, it is a consequence of the Martingale Representation Theorem (cf. \cite{iKaratzas1999}) that the allowance price can be represented as an It\^{o} integral with respect to the Brownian motion $(\tilde{W}^0_t,\tilde{W}_t)$. It follows that
\begin{equation}\label{eq:allowance_process}
 \Id A_t = r A_t \Id t +Z^0_t \Id \tilde{W}^0_t + Z_t \cdot \Id \tilde{W}_t, \quad \text{for } t\in[0,T]
\end{equation}
for some $\mathcal{F}_t$-adapted square integrable process $(Z^0_t,Z_t)$.

Combining equations \eqref{eq:demand_process}, \eqref{eq:fuel_price_process}, \eqref{eq:cumulative_emissions_process}, \eqref{eq:allowance_terminal_cnd} and \eqref{eq:allowance_process}, the pricing problem can be reformulated as the solution of the FBSDE
\begin{equation}\label{eq:allowance_FBSDE}
\left\{
 \begin{aligned}
 \Id D_t &= \mu_d(t,D_t) \Id t + \sigma_d(D_t) \Id \tilde{W}^0_t,		&& D_0=d_0\in (0,\bar{x}),\\
 \Id S_t &= \mu_s(S_t) \Id t + \sigma_s(S_t) \Id \tilde{W}_t,			&& S_0=s_0\in \R^n,\\
 \Id E_t &= \mu_e(D_t,A_t,S_t) \Id t,						&& E_0=0,\\
 \Id A_t &=r A_t \Id t + Z^0_t \Id \tilde{W}^0_t + Z_t\cdot \Id \tilde{W}_t,	&& A_T = \phi(E_T).
 \end{aligned}
\right.
\end{equation}
Notice that the first two equations are standard stochastic differential equations (in the forward direction of time) which do not depend upon the cumulative emissions and the allowance price. We will choose their coefficients so that existence and uniqueness of solutions hold. To be more specific we make the following assumptions on the coefficients of \eqref{eq:allowance_FBSDE}:
\renewcommand{\theenumi}{A\theassumption.\arabic{enumi}}
\begin{assumption}
\label{as:DSexistence}
The functions $\mu_d : [0,T]\times [0,\bar{x}] \hookrightarrow \R$, $\sigma_d : [0,\bar{x}] \hookrightarrow \R$, $\mu_s : \R^n \hookrightarrow \R^n$, $\sigma_s : \R^n \hookrightarrow \R^n\times \R^n$ are such that the first two equations in \eqref{eq:allowance_FBSDE} have a unique strong solution.
\end{assumption}

\subsection{Existence of a Solution to the Allowance Pricing Problem}
\begin{theorem}
\label{th:existence}
We assume that Assumption \ref{as:DSexistence} holds and that $\mu_e$ is Lipshitz with respect to the variable $a$ uniformly in $x$ and $s$, and that $\mu_e(x,0,s)$ is uniformly bounded in $x$ and $s$. Then if $\phi$ is bounded and Lipschitz, the FBSDE \eqref{eq:allowance_FBSDE} has a unique square integrable solution.
\end{theorem}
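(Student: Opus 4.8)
The plan is to exploit the triangular structure of the system \eqref{eq:allowance_FBSDE}. First I would solve the two forward equations for $(D_t)$ and $(S_t)$ in isolation: by Assumption \ref{as:DSexistence} they admit unique strong solutions, and, crucially, neither depends on $E$ or $A$. Treating the resulting paths as given, the problem reduces to a coupled forward--backward system in the pair $(E,A)$ alone, in which the forward component $E$ is a bounded-variation process (it carries no martingale part) whose drift is driven by the backward component $A$, while the terminal condition of $A$ depends on $E$ through $\phi(E_T)$. I would recast this as a fixed-point problem for the allowance process: to a candidate adapted process $A$ associate $E^A_t:=\int_0^t\mu_e(D_u,A_u,S_u)\,\Id u$ and then $\Phi(A)_t:=e^{r(T-t)}\E^{\Q}[\phi(E^A_T)\mid\cF_t]$; a fixed point of $\Phi$, together with the integrand furnished by the martingale representation theorem, is exactly a solution of \eqref{eq:allowance_FBSDE}. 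Before anything else I would record the a priori bounds that follow from the hypotheses: since $\mu_e$ takes values in $\R_+$ and is bounded, $0\le E^A_t\le t\,\|\mu_e\|_\infty$; and since $\phi$ is bounded, $|\Phi(A)_t|\le e^{rT}\|\phi\|_\infty$. In particular every candidate of interest is bounded, hence square integrable.

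On a short horizon I would prove existence and uniqueness by showing that $\Phi$ is a contraction. Writing $\delta A=A^1-A^2$ and estimating with the Lipschitz constants $L_{\mu_e}$ (in $a$, uniform in $x,s$) and $L_\phi$, conditional Jensen and the tower property give a bound of the form $\|\Phi(A^1)-\Phi(A^2)\|\le T\,e^{rT}L_\phi L_{\mu_e}\,\|\delta A\|$ in a suitable $L^2$-type norm on adapted processes. Thus $\Phi$ is a contraction, and the Banach fixed-point theorem yields a unique solution, as soon as the horizon is small enough that $T\,e^{rT}L_\phi L_{\mu_e}<1$.

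The main obstacle is to remove this smallness restriction and reach an arbitrary horizon $T$, since a plain iteration of $\Phi$ does not improve the constant (the conditional expectation always reaches forward to the terminal time, so no factorial gain appears). This is exactly where the monotonicity built into Definition \ref{def:market_ems} and into $\phi$ must be used. I would encode the solution through its decoupling field $u(t,d,s,e)$, characterised by $A_t=u(t,D_t,S_t,E_t)$ and $u(T,\cdot,\cdot,e)=\phi(e)$, and show that the negative feedback $\partial_a\mu_e\le0$ together with $\phi$ non-decreasing propagates the monotonicity of $\phi$ backward, keeping $u$ non-decreasing in $e$ and, more importantly, controlling its Lipschitz constant in $e$ uniformly in $t$: if $e_1>e_2$ then the two emission paths started from them satisfy $E^1_\tau\ge E^2_\tau$, hence $A^1\ge A^2$, hence $\mu_e(\cdot,A^1,\cdot)\le\mu_e(\cdot,A^2,\cdot)$, so the gap $E^1_\tau-E^2_\tau$ is non-increasing and $0\le E^1_T-E^2_T\le e_1-e_2$; consequently $|u(t,d,s,e_1)-u(t,d,s,e_2)|\le e^{r(T-t)}L_\phi(e_1-e_2)\le e^{rT}L_\phi(e_1-e_2)$. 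Because this Lipschitz bound cannot blow up, the short-horizon solution extends step by step over $[0,T]$: solving first on $[T-h,T]$ and then using $u(T-h,\cdot)$ as the (still bounded, still Lipschitz) terminal datum on the next block, one patches the local solutions into a global one. I expect this non-blow-up estimate to be the technical heart of the argument.

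Finally, for uniqueness at an arbitrary horizon I would give a direct duality argument that uses the monotonicity once more. Given two solutions, set $\delta A=A^1-A^2$, $\delta E=E^1-E^2$ and $\delta\beta_t=\mu_e(D_t,A^1_t,S_t)-\mu_e(D_t,A^2_t,S_t)$; strict monotonicity of $\mu_e$ in $a$ gives $\delta A_t\,\delta\beta_t\le0$. Applying It\^o's formula to the discounted product $e^{-rt}\delta A_t\,\delta E_t$ (the covariation vanishes because $\delta E$ has finite variation) and taking expectations, the martingale terms drop and, since $\delta E_0=0$,
\[
\E^{\Q}\!\left[e^{-rT}\big(\phi(E^1_T)-\phi(E^2_T)\big)\,\delta E_T\right]=\E^{\Q}\!\int_0^T e^{-rt}\,\delta A_t\,\delta\beta_t\,\Id t.
\]
The left-hand side is $\ge0$ because $\phi$ is non-decreasing, while the right-hand side is $\le0$; hence both vanish, the integrand $\delta A_t\,\delta\beta_t$ is zero almost everywhere, and strict monotonicity of $\mu_e$ in $a$ forces $\delta A\equiv0$. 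Uniqueness of $E$ and of $(Z^0,Z)$ then follows from the forward equation and the martingale representation theorem.
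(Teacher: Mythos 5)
Your proposal is sound in outline and takes a genuinely different route from the paper. The first step is identical: both exploit the triangular structure and solve the decoupled forward equations for $(D_t)$ and $(S_t)$ first. From there the paper treats the residual system in $(E,A)$ as an FBSDE with random coefficients and invokes Theorem 7.1 of Ma et al., arguing that because $E$ has no martingale part the \emph{dominating ODEs} are explicitly solvable and bounded, and that the one-dimensional \emph{characteristic BSDE} (driven by a multi-dimensional Brownian motion) is handled by Kobylanski's comparison theorem. You instead build everything by hand: a fixed point for the map $A\mapsto e^{r(T-\cdot)}\E^{\mathbb{Q}}[\phi(E^A_T)\,|\,\mathcal{F}_\cdot]$ on short horizons, a non-blow-up estimate for the Lipschitz constant in $e$ of the decoupling field obtained from the negative feedback ($\mu_e$ decreasing in $a$, $\phi$ non-decreasing), a block-by-block continuation to $[0,T]$, and a Peng--Wu-type duality identity for global uniqueness. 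What your route buys is transparency: it isolates exactly \emph{why} the arbitrary-horizon result holds (the sign of the feedback keeps the gap $E^1-E^2$ non-increasing, so the Riccati-type growth of the Lipschitz constant that obstructs global solvability of generic FBSDEs never occurs), it is self-contained, and your duality argument gives uniqueness among \emph{all} square-integrable solutions, not only those of decoupling-field form. What the paper's route buys is brevity and the fact that the abstract criterion does not visibly lean on monotonicity of $\phi$; it is also closer in spirit to the machinery the authors later invoke (Carmona--Delarue) for discontinuous $\phi$.

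Two points need tightening. First, your comparison step is stated circularly: you derive $E^1\ge E^2$ from $A^1\ge A^2$ and $A^1\ge A^2$ from $E^1_T\ge E^2_T$. This is fixable by standard means --- show that each Picard iterate on a short block is non-decreasing in $e$ (using $\phi$ non-decreasing and $\mu_e$ non-increasing in $a$), pass to the limit, and propagate the monotone decoupling field backward block by block; alternatively use the flow/uniqueness property to argue that ordered initial emissions cannot cross --- but as written the ``technical heart'' you identify is not yet a proof. Second, both your non-blow-up estimate and your uniqueness identity use that $\phi$ is non-decreasing and that $\mu_e$ is (strictly) decreasing in $a$. These are standing model assumptions (Definition 2 and the discussion around the terminal condition), but they are not among the hypotheses restated in the theorem, so you should say explicitly that you are invoking them; the theorem as literally worded (``$\phi$ bounded and Lipschitz'') is not covered by your argument without them.
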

\begin{proof}
Assumption \ref{as:DSexistence} being satisfied, and the first two equations of \eqref{eq:allowance_FBSDE} being decoupled from the remaining ones, there exist adapted processes $(D_t)$ and $(S_t)$ with values in $[0,\bar{x}]$ and
$\R^n$ respectively, unique strong solutions of the first two equations of \eqref{eq:allowance_FBSDE}. Once these two processes are
constructed, we can plug their values into the last two equations of \eqref{eq:allowance_FBSDE}, and treat the resulting equations as an FBSDE with random coefficients. Existence and uniqueness hold because of Theorem 7.1 of \cite{jMa2011} \footnote[1]{We would like to thank Francois Delarue for suggesting this strategy and the use of \cite{jMa2011} in the present set-up.}. Strictly speaking this result is only proved for one-dimensional processes. In the present situation, while $E_t$ and $A_t$ are indeed one-dimensional, the Wiener process is $(n+1)$-dimensional and we cannot apply directly Theorem 7.1 of \cite{jMa2011}. However, a close look at the proof of this result shows that what is really needed is to prove the well-posedness of the \emph{characteristic BSDE}, and the boundedness of its solution and the solutions of the \emph{dominating Ordinary Differential Equations} (ODE). In the present situation, these equations are rather simple due to the fact that $E_t$ has bounded variation, and as a consequence, its volatility vanishes. The two dominating ODEs can be solved explicitly and one can check that the solutions are bounded by inspection. Moreover, the function $\phi$ for the terminal condition being uniformly Lipschitz, the characteristic BSDE is one-dimensional, though driven by a multi-dimensional Brownian motion, its terminal condition is bounded, and Kobylanski's comparison results (see the original contribution \cite{mKobylanski2000}) can be used to conclude the proof.
\end{proof}

The above existence result is proven for a terminal condition given by a smooth function $\phi$. As already mentioned earlier, single compliance period equilibrium models most often require that the function $\phi$ take two values, and the terminal condition $\phi(E_T)$ equal the penalty when the regulatory cap is exceeded, i.e. when $E_T>\Gamma$, and zero when $E_T< \Gamma$. \cite{CarmonaDelarue} proved that a weaker form of existence and uniqueness of a solution to the FBSDE still holds when $\phi$ is discontinuous (in particular when $\phi$ is an indicator function). Given that the \emph{decoupling field} constructed in \cite{jMa2011} is uniformly Lipschitz, we conjecture that a proof in the spirit of the one given in \cite{CarmonaDelarue} should work here and provide this weaker form of existence and uniqueness. However, Carmona and Delarue also proved that under a strict monotonicity assumption on $\mu_e$ (which should hold in our case for intuitive reasons), the aggregate emissions were equal with positive probability to the cap at the end of the compliance period, and the terminal condition could not be prescribed for all the scenarios. We suspect that in the present situation, the cumulative emissions equal the cap (i.e. $E_T=\Gamma$) for a set of scenarios of positive probability, and the terminal price of an allowance $A_T$ cannot be prescribed in advance on this set of scenarios.

\section{Valuing Clean Spread Options}
In this section we consider the problem of spread option pricing as described in the introduction. Whether the goal is to value a physical asset or risk manage financial positions, one needs to compute the price
of a European call option on the difference between the price of electricity and the costs of production for a particular power plant. The costs that we take into account are the fixed operation and maintenance costs, the cost of the fuel needed to generate one MWh of electricity and the cost of the ensuing emissions. Letting the $\mathcal{F}_t$-adapted process $(P_t)$ denote the spot price of electricity, and recasting the informal discussion of the introduction with the notation we chose to allow for several input fuels, a clean spread option with maturity $\tau\in[0,T]$ is characterized by the payoff 
\begin{equation*}
 \left( P_\tau - h_v S^v_\tau - e_v A_\tau-K \right)^+,
\end{equation*}
where $K$ represents the value of the fixed operation and maintenance costs, $h_v\in \R_{++}$ and $e_v\in\R_{++}$ denote the specific heat and emissions rates of the power plant under consideration, and $S^v\in\{S^1,\ldots,S^n\}$ is the price at time $\tau$ of the fuel used in the production of electricity. In the special case when $S^v$ is the price of coal (gas) the option is known as a \textit{clean dark (spark) spread} option.

Since we are pricing by expectation, the value $V^v_t$ of the clean spread is given by the conditional expectation under the pricing measure of the discounted payoff; i.e.
\begin{equation*}
 V^v_t = \exp(-r(\tau-t))\E^\mathbb{Q}\left[\left(P_\tau - h_v S^v_\tau - e_v A_\tau-K\right)^+|\mathcal{F}_t\right], \quad \text{for } t\in[0,\tau].
\end{equation*}

\section{A Concrete Two-Fuel Model}
We now turn to the special case of two fuels, coal and gas.

\subsection{The Bid Stack}\label{str:num_bid_stack}
Our bid stack model is a slight variation on the one we proposed in \cite{rCarmona2012}. Here we extend to include the cost of emissions as part of the variable costs driving firms' bids.

We assume that the coal and gas generators have aggregate capacities $\bar{x}_c$ and  $\bar{x}_g$ respectively, so that the market capacity is $\bar{x} = \bar{x}_c+\bar{x}_g$, and their bid levels are given by linear functions of the allowance price and the price of the fuel used for the generation of electricity. We denote these bid functions by $b_c$ and $b_g$ respectively.  The coefficients appearing in these linear functions correspond to the marginal emissions rate (measured in ton equivalent of CO$_2$ per MWh) and the heat rate (measured in MMBtu per MWh) of the technology in question. Specifically, for $i\in\{c,g\}$, we assume that
\begin{equation}\label{eq:fuel_bid_curves}
 b_i(x,a,s):=e_i(x)a+h_i(x)s, \quad \text{for } (x,a,s) \in [0,\bar{x}_i]\times \R \times \R,
\end{equation}
where the \textit{marginal emissions rate} $e_i$ and the \textit{heat rate} $h_i$ are given by
\begin{equation*}
\begin{aligned}
 e_i(x)&:=\hat{e}_i\exp\left(m_ix\right)\\
 h_i(x)&:=\hat{h}_i\exp\left(m_ix\right)
\end{aligned}
,\quad \text{for } x\in [0,\bar{x}_i].
\end{equation*}
Here $\hat{e}_i$, $\hat{h}_i$, $m_i$ are strictly positive constants. We allow the marginal emissions rate and the heat rate of each technology to vary to reflect differences in efficiencies within the fleet of coal and gas generators.  Since less efficient plants with higher heat rates have correspondingly higher emissions rates, it is a reasonable approximation to assume that for each technology the ratio $h_i/e_i$ is fixed.

\begin{proposition}\label{prop:two_fuel_market_bs}
With $b_c$ and $b_g$ as above and $I=\{c,g\}$, the market bid stack $b$ is given by 
\begin{equation*}
 b(x,a,s) = \left\{\begin{array}{ll}
 \left(\hat{e}_i a+\hat{h}_i s_i\right)\exp\left(m_i x\right), & \text{ if  } b_i(x,a,s_i)\leq b_j(0,a,s_j) \text{  for } i,j\in I, i\ne j, \\
 \left(\hat{e}_i a+\hat{h}_i s_i\right)\exp\left(m_i (x-\bar x_j)\right), & \text{ if  } b_i(x-\bar x_j,a,s_i)> b_j(0,a,s_j) \text{  for } i,j\in I, i\ne j, \\
 \prod_{i\in I}\left(\hat{e}_i a+\hat{h}_i s_i\right)^{\beta_i}\exp\left(\gamma x\right), & \text{  otherwise} \end{array}\right.
 \end{equation*}
for $(x,a,s) \in [0,\bar{x}]\times \R \times \R^2$, where $\beta_i=\frac{m_{M\setminus \{i\}}}{m_c+m_g}$ and $\gamma=\frac{m_cm_g}{m_c+m_g}$.
\end{proposition}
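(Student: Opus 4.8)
The plan is to recast the informal ``merit-order ordering'' of the two technology stacks as the inversion of an aggregate supply function, and then read off the three cases according to which generators are dispatched at the marginal price. Throughout I fix $(a,s)=(a,s_c,s_g)$ and abbreviate the base cost of technology $i\in\{c,g\}$ by $c_i:=\hat e_i a+\hat h_i s_i$; for nonnegative allowance prices and positive fuel prices one has $c_i>0$, which I assume so that the bids are positive. With this notation the within-stack marginal bid of technology $i$ is $b_i(\xi,a,s_i)=c_i\exp(m_i\xi)$, a strictly increasing continuous bijection from $[0,\bar x_i]$ onto $[c_i,\,c_i e^{m_i\bar x_i}]$.

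First I would build the aggregate supply. Inverting each $b_i(\cdot,a,s_i)$ gives the quantity technology $i$ offers at or below a price level $p$,
\[
q_i(p)=\min\left(\max\left(\tfrac{1}{m_i}\log(p/c_i),\,0\right),\,\bar x_i\right),
\]
and the total quantity supplied at price $p$ is $Q(p):=q_c(p)+q_g(p)$. The function $Q$ is continuous, nondecreasing on $[\min(c_c,c_g),\infty)$, and strictly increasing until it saturates at $Q=\bar x$; hence for every $x\in[0,\bar x]$ there is a unique marginal price $p$ with $Q(p)=x$, and by the merit-order rule the bid stack is exactly $b(x,a,s)=Q^{-1}(x)$. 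This simultaneously shows that $b(\cdot,a,s)$ is well defined and strictly increasing, consistent with the monotonicity required of a bid stack.

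Next I would partition $[0,\bar x]$ by the dispatch regime of the marginal price $p=b(x,a,s)$ and solve for $p$ in each. If only one technology $i$ is dispatched (the other has not entered, $p\le c_j$), then $q_j(p)=0$ and $x=q_i(p)$, so $p=c_i e^{m_i x}=b_i(x,a,s_i)$, with the boundary to the next regime at $p=c_j=b_j(0,a,s_j)$; this is the first case. If both are dispatched and neither saturated, then $x=\tfrac1{m_c}\log(p/c_c)+\tfrac1{m_g}\log(p/c_g)$; solving this linear equation for $\log p$ and using $\beta_i=m_{M\setminus\{i\}}/(m_c+m_g)$ and $\gamma=m_cm_g/(m_c+m_g)$ yields $p=\prod_{i}c_i^{\beta_i}\exp(\gamma x)$, the third (``otherwise'') case; this is the one genuine computation. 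If technology $j$ is saturated ($q_j=\bar x_j$) and $i$ is marginal, then $x=\bar x_j+q_i(p)$, so $p=c_i e^{m_i(x-\bar x_j)}=b_i(x-\bar x_j,a,s_i)$, the second case.

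The main obstacle is the bookkeeping rather than any single computation: one must verify that the three regions are exhaustive and mutually exclusive and that the three formulas agree on their shared boundaries, so that the pasted function really is $Q^{-1}$. I would order the four price breakpoints $c_c,\,c_g,\,c_c e^{m_c\bar x_c},\,c_g e^{m_g\bar x_g}$, translate each into the corresponding threshold in $x$, and check continuity of $b$ at every transition. Two points require care: the tacit domain restrictions encoded in the case conditions (the expression $b_i(x,\cdot)$ presupposes $x\le\bar x_i$, and $b_i(x-\bar x_j,\cdot)$ presupposes $\bar x_j\le x\le\bar x$), and the identification of the saturation boundary that opens the second regime, which is governed by $j$'s most expensive bid $b_j(\bar x_j,a,s_j)$; a direct check in the symmetric case $m_c=m_g$, $\bar x_c=\bar x_g$, $c_c=c_g$ (where the ``otherwise'' formula must hold on all of $(0,\bar x)$) pins down this threshold.
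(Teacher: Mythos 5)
Your argument is correct and is essentially the one the paper delegates to its one-line citation: the proof of Corollary 1 in \cite{rCarmona2012} proceeds exactly as you do, inverting each technology's bid curve to get the quantity $q_i(p)$ offered at or below price $p$, summing to form the aggregate supply $Q$, and identifying the three formulas with the three dispatch regimes of $Q^{-1}$ (only $i$ running, both interior, $j$ saturated), the blend case coming from the single linear equation for $\log p$ that you solve. One remark worth keeping: your derivation correctly places the boundary of the second regime at saturation of technology $j$, i.e. $p>b_j(\bar x_j,a,s_j)$, whereas the condition as printed in the proposition reads $b_j(0,a,s_j)$; the printed condition would misclassify points where both technologies are partially dispatched (take $c_c=c_g$ and $x$ slightly above $\bar x_j$), so what your argument establishes is the corrected statement, consistent with the cited corollary.
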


\begin{proof}
The proof is a straightforward extension of Corollary 1 in \cite{rCarmona2012}.
\end{proof}

\subsection{The Emissions Stack}\label{str:num_ems_stack}
In order to determine the rate at which the market emits we need to know which generators are supplying electricity at any time. By the merit order assumption the market operator calls upon firms in increasing order of their bid levels. Therefore, given electricity, allowance and fuel prices $(p,a,s)\in \R\times\R\times\R^2$, for $i\in\{c,g\}$, the set of active generators of fuel type $i$ is given by $\left\{ x \in [0,\bar{x}_i] : b_i(x,a,s)\leq p\right\}$.
\begin{proposition}
Assuming that the market bid stack is of the form specified in Proposition \ref{prop:two_fuel_market_bs}, the market emissions rate $\mu_e$ is given by
\begin{equation}\label{eq:muE}
\mu_e(x,a,s):=\frac{\hat e_g}{m_g}\left(\exp\left(m_g \hat b_g^{-1}(b(x,a,s),a,s_g)\right)-1\right) + \frac{\hat e_c}{m_c}\left(\exp\left(m_c \hat b_c^{-1}(b(x,a,s),a,s_c)\right)-1\right)
\end{equation}
for $(x,a,s)\in [0,\bar{x}] \times \R \times \R^2$, where for $i\in\{c,g\}$ we define 
\begin{equation*}
 \hat b_i^{-1}(p,a,s_i):=0 \vee \left(\bar x_i \wedge \frac1{m_i}\log\left(\frac{p}{\hat e_i a + \hat h_i s_i}\right)\right),
 \end{equation*}
 for $(p,a,s)\in \R \times \R \times \R^2$.
\end{proposition}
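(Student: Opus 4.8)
The plan is to compute $\mu_e$ directly from its economic meaning: given a total electricity supply $x$ at allowance and fuel prices $(a,s)$, identify the marginal (clearing) price, determine which generators of each fuel type are dispatched, and integrate the marginal emissions rate over the set of active units. First I would observe that, by construction of the bid stack, supplying $x$ units of electricity pins the marginal price at $p := b(x,a,s)$. By the merit order assumption the operator calls upon every generator whose bid lies below $p$, so — as already noted in the text — the active generators of fuel type $i\in\{c,g\}$ form the set $\{\xi\in[0,\bar{x}_i]: b_i(\xi,a,s_i)\le p\}$. Since $b_i(\cdot,a,s_i)$ is continuous and strictly increasing in its first argument, this set is an interval $[0,x_i^*]$; solving $b_i(x_i^*,a,s_i)=p$ and clipping to $[0,\bar{x}_i]$ yields exactly $x_i^*=\hat b_i^{-1}(p,a,s_i)$ in the stated form, with the $0\vee$ and $\bar{x}_i\wedge$ accounting respectively for the regimes in which fuel $i$ is idle or fully saturated.

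Next I would integrate the marginal emissions over the dispatched units. Each active unit at position $\xi$ in the fuel-$i$ stack emits at rate $e_i(\xi)=\hat e_i\exp(m_i\xi)$, so the emissions contribution of fuel $i$ is
\begin{equation*}
\int_0^{x_i^*}\hat e_i\exp(m_i\xi)\,\Id\xi=\frac{\hat e_i}{m_i}\bigl(\exp(m_i x_i^*)-1\bigr).
\end{equation*}
Substituting $x_i^*=\hat b_i^{-1}(b(x,a,s),a,s_i)$ and summing over $i\in\{c,g\}$ reproduces precisely \eqref{eq:muE}.

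The one point requiring care — and the main obstacle — is the self-consistency of this dispatch: I must confirm that evaluating the \emph{individual} fuel inverses $\hat b_i^{-1}$ at the \emph{market} price $p=b(x,a,s)$ recovers the correct merit-order dispatch, i.e. that the dispatched capacities sum to the supply, $x=\hat b_c^{-1}(p,a,s_c)+\hat b_g^{-1}(p,a,s_g)$. This is nothing other than the defining relation of the market stack $b$ in Proposition \ref{prop:two_fuel_market_bs}, and verifying it amounts to checking that proposition's three regimes in turn. In the single-fuel regime one inverse returns $x$ while the other is driven to $0$ by the bidding inequality $b_i(x,a,s_i)\le b_j(0,a,s_j)$; in the saturated regime the cheaper fuel's inverse returns $\bar x_j$ and the marginal fuel's returns $x-\bar x_j$; and in the interleaved regime the product form of $b$ places both inverses strictly inside their capacity intervals. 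In each case the identity is a direct algebraic consequence of the explicit expressions for $b$ and $\hat b_i^{-1}$, so once this bookkeeping is discharged the formula \eqref{eq:muE} follows immediately. As a sanity check one can also confirm that the resulting $\mu_e$ inherits the monotonicity properties demanded in Definition \ref{def:market_ems}, strict increase in $x$ and strict decrease in $a$.
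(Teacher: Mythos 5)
Your proposal is correct and follows essentially the same route as the paper's proof: identify the active generators of each fuel type as the sub-level set of $b_i$ at the market price $b(x,a,s)$, use monotonicity of $b_i$ to write this as the interval $[0,\hat b_i^{-1}(b(x,a,s),a,s_i)]$, integrate the marginal emissions rate $e_i$ over it, and sum over $i$. The additional self-consistency check that the dispatched capacities sum to $x$ across the three regimes of Proposition \ref{prop:two_fuel_market_bs} is a welcome extra bit of rigour that the paper leaves implicit, but it does not change the argument.
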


\begin{proof}
The market emissions rate follows from integrating the marginal emissions rate $e_i$ for each technology over the corresponding set of active generators and then summing the two. Given the monotonicity of $b_i$ in $x$ and its range $[0,\bar x_i]$, the function $\hat b_i^{-1}$ describes the quantity of electricity supplied by fuel $i\in\{c,g\}$, and hence the required upper limit of integration.
\end{proof}

\subsection{Specifying the Exogenous Stochastic Factors}\label{str:stoch_proc}

\subsubsection*{The Demand Process}\label{str:dem_proc}
We posit that under $\mathbb{Q}$, the process $(D_t)$ satisfies  for $t\in[0,T]$ the stochastic differential equation
\begin{equation*}
 \Id D_t = -\eta\left(D_t-\bar{D}(t)\right) \Id t + \sqrt{2\eta \hat{\sigma} D_t\left(\bar{x} - D_t\right)} \Id \tilde{W}_t, \qquad D_0=d_0 \in (0,\bar{x}),
\end{equation*}
where $[0,T] \ni t \hookrightarrow \bar{D}(t)\in (0,\bar{x})$ is a deterministic function giving the level of mean reversion of the demande and $\eta, \hat{\sigma} \in\R_{++}$ are constants. With this definition $(D_t)$ is a Jacobi diffusion process; it has a linear, mean-reverting drift component and degenerates on the boundary. Moreover, subject to $\min(\bar{D}(t),\bar{x}-\bar{D}(t))\geq \bar{x}\hat{\sigma}$, for $t\in [0,T]$, the process remains within the interval $(0,\bar{x})$ at all times (cf. \cite{jForman2008}). To capture the seasonal character of demand, we choose a function $\bar{D}(t)$ of the form:
\begin{equation*}
 \bar{D}(t):=\varphi_0 + \varphi_1 \sin(2\pi\vartheta t),
\end{equation*}
where the coefficients will be chosen below.

\subsubsection*{The Fuel Price Processes}
 We assume that the prices of coal $(S^c_t)$ and gas $(S^g_t)$ follow correlated exponential OU processes under the measure $\mathbb{Q}$; i.e., for $i \in \{c,g\}$ and $t\in[0,T]$,
\begin{equation*}
\Id S^i_t = -\eta_i \left(\log S^i_t-\bar{s}_i - \frac{\hat{\sigma}_i^2}{2 \eta_i}\right)S^i_t \Id t + \hat{\sigma}_i S^i_t \Id \tilde{W}^i_t, \quad S^i_0=s^i_0 \in \R_{++},
\end{equation*}
where $\Id\left<W^c,W^g\right>_t = \rho \Id t$.

\section{Numerical Analysis}\label{str:num_analysis}
We now turn to the detailed analysis of the model we propose. For this purpose we consider a number of case studies in \textsection \ref{str:CvsD} to \textsection \ref{str:CATvsCT}. To produce the following results we used the numerical schemes explained in Appendix \ref{ap:num_soln_FBSDE} and Appendix \ref{ap:num_soln_spread}.

\subsection{Choice of Parameters}
The tables in this section summarise the parameters used for the numerical analysis of our model that follows below. We refer to the parameters specified in Tables \ref{tab:numpar_stack} - \ref{tab:numpar_spread} as the `base case' and indicate whenever we depart from this choice. Note that our parameter choices do not correspond to a particular electricity market, but that all values are within a realistic realm.\\

Table \ref{tab:numpar_stack} summarises the parameters specifying the bid curves. We consider a medium sized electricity market served by coal and gas generators and with gas being the dominant technology. For the marginal emission rates, Table \ref{tab:numpar_stack} implies that $e_c\in[0.9,1.64]$ and $e_g\in[0.4,0.69]$ (both measured in tCO$_2$ per MWh), so that all gas plants are `cleaner' than all coal plants. For the heat rates, we observe that $h_c\in[3,5.5]$ and $h_g\in[7,12]$ (both measured in MMBtu per MWh). Using \eqref{eq:muE} now with $D_t=\bar x$, for $0\leq t \leq T$, and the assumption that there are 8760 production hours in the year, we find, denoting the the maximum cumulative emissions by $\bar{e}$, that $\bar{e} = 2.13e+08$. \\

\begin{table}
 \begin{center}
  \begin{tabular}{ccccccccc}
\hline
    $\hat{h}_c$ & $\hat{e}_c$ & $m_c$ & $\bar{x}_c$ & $\hat{h}_g$ & $\hat{e}_g$ & $m_g$ & $\bar{x}_g$ & $\bar{x}$ \\
\hline
     3 & 0.9 & 0.00005 & 12000 & 7 & 0.4 & 0.00003 & 18000 & 30000\\
\hline
  \end{tabular}
 \end{center}
 \caption{Parameters relating to the bid stack and the emissions stack.}
 \label{tab:numpar_stack}
\end{table}
Table \ref{tab:numpar_demand} contains the parameters for the demand process $(D_t)$. We model periodicities on an annual and a weekly time scale and the chosen rate of mean reversion assumes that demand reverts to its (time dependent) mean over the course of one week.
\begin{table}
 \begin{center}
  \begin{tabular}{cccccccc}
    \hline
    $\eta$ & $\varphi_0$ & $\varphi_1$ & $\vartheta$ & $\hat{\sigma}$ & $d_0$ \\
    \hline
     50 & 21000 & 3000 & 1 & 0.1 & 21000\\
    \hline
    \end{tabular}
 \end{center}
 \caption{Parameters relating to the demand process.}
 \label{tab:numpar_demand}
\end{table}

In Table \ref{tab:numpar_fuel} we summarise the parameters that specify the behavior of the prices of coal and gas. Both are chosen to be slowly mean-reverting, at least in comparison to demand. To ease analysis, we assume that all parameters are identical for the two fuels, including mean price levels, both measured in MMBtu.  
 
\begin{table}
 \begin{center}
  \begin{tabular}{ccccccccc}
\hline
    $\eta_c$ & $\bar{s}_c$ & $\hat{\sigma}_c$ & $s^c_0$ & $\eta_g$ & $\bar{s}_g$ & $\hat{\sigma}_g$ & $s^g_0$ & $\rho$\\
\hline
     1.5 & 2 & 0.5 & $\exp(2)$ & 1.5 & 2 & 0.5 & $\exp(2)$ & 0.3\\
\hline
  \end{tabular}
 \end{center}
 \caption{Parameters relating to the fuel price processes.}
 \label{tab:numpar_fuel}
\end{table}

Table \ref{tab:numpar_cat} defines the cap-and-trade scheme that we assume to be in place. The duration of the compliance period $T$ is measured in years and we set the cap at 70\% of the upper bound $\bar{e}$ for the cumulative emissions, in order to incentivise a reduction in emissions.  This choice of parameters results in $A_0$ being approximately equal to $\pi/2$, a value for which there is significant initial overlap between gas and coal bids in the stack.  Furthermore, the parameters imply a bid stack structure such that at mean levels of coal and gas prices, $A_t=0$ pushes all coal bids below gas bids, while for $A_t=\pi$ almost all coal bids are above all gas bids.  
\begin{table}
 \begin{center}
  \begin{tabular}{cccc}
\hline
    $\pi$ & $\Gamma$ & $T$ & $r$ \\
\hline
     100 & 1.4e+08 & 1 & 0.05 \\
\hline
  \end{tabular}
 \end{center}
 \caption{Parameters relating to the cap-and-trade scheme.}
 \label{tab:numpar_cat}
\end{table}

Finally, in Table \ref{tab:numpar_spread} we specify the four spread option contracts used in the base case scenario to represent high and low efficiency coal plants, and high and low efficency gas plants. (Note that low efficiency means dirtier and corresponds to high $h_v$ and $e_v$ and vice versa.)  
\begin{table}
 \begin{center}
  \begin{tabular}{cccccccc}
\hline
\multicolumn{2}{c}{High Eff. Coal} & \multicolumn{2}{c}{Low Eff. Coal} & \multicolumn{2}{c}{High Eff. Gas} & \multicolumn{2}{c}{Low Eff. Gas} \\
\hline
    $h_c$ & $e_c$ & $h_c$ & $e_c$ &$h_g$ & $e_g$ & $h_g$ & $e_g$ \\
\hline
     3.5 & 1.05  & 5.0 &  1.5 &   7.5 & 0.43  & 11.5& 0.66 \\
\hline
  \end{tabular}
 \end{center}
 \caption{Parameters relating to the spread options.}
 \label{tab:numpar_spread}
\end{table}

We now consider a series of case studies to investigate various features of the model's results in turn.  As the model captures many different factors and effects, this allows us to isolate some of the most important implications.  In Case Study I, we investigate the impact on coal and gas plants of different efficiencies of creating an increasingly strict carbon emissions market.  In Case Study II, we assess the impact on these plants of changes in initial fuel prices.  In Case Study III, we compare spread option prices in our model with two simple reduced-form approaches for $A_t$, which allows us to better understand the role of key model features such as bid stack driven abatement.  Finally in Case Study IV, we consider the overall impact of cap-and-trade markets in the electricity sector, by comparing with a well-known alternative, a fixed carbon tax.  

\subsection{Case Study I:  Impact of the Emissions Market}\label{str:CvsD}

The first effect that we are interested in studying in the model is the impact of the cap-and-trade market on clean spread option prices, for increasingly strict levels of the cap $\Gamma$.  At one extreme (when the cap is so generous that $A_t\approx 0$, for all $t\in[0,T]$), results correspond to the case of a market without a cap and trade system, while at the other extreme (when the cap is so strict that $A_t\approx \pi\exp(-r(T-t))$, for all $t\in[0,T]$), there is essentially a very high carbon tax which tends to push most coal generators above gas generators in the stack.  It is intuitively clear that higher carbon prices typically lead to higher spark spread option prices and lower dark spread option prices, thus favouring gas plants over coal plants, but the relationships can be more involved as they vary between low efficiency and high efficiency plants.  

\begin{figure}
  \centering
  \includegraphics[width=\textwidth]{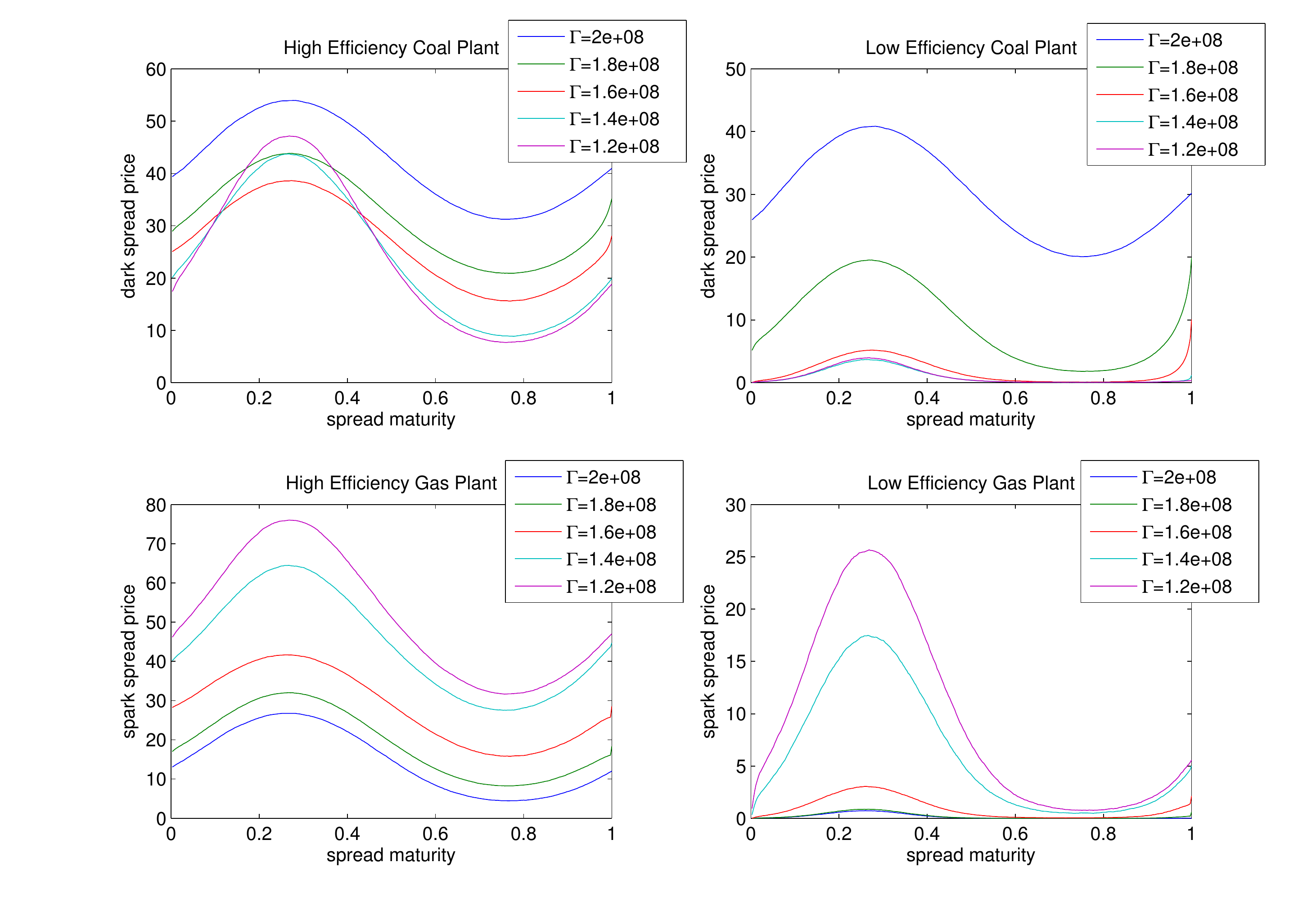}
  \caption{Cap strictness analysis for high efficiency coal (top left), low efficiency coal (top right), high efficiency gas (bottom left) and low efficiency gas (bottom right):  Spark and dark spread option values plotted against maturity, for varying levels of the cap $\Gamma$.  Note that the five equally-spaced cap values from $2e+08$ to $1.2e+08$ tons of CO$_2$ imply initial allowance prices of \$5, \$28, \$52, \$80, and \$94. } 
 \label{fig:casestudy1}
\end{figure}
In Figure \ref{fig:casestudy1}, we compare spread option prices corresponding to different efficiency generators (i.e., to different $h_v,e_v$ in the spread payoff) as a function of maturity $\tau$.  `High' and `low' efficiency plant indicates values of $h_v,e_v$ chosen to be near the lowest and highest respectively in the stack, as given by Table \ref{tab:numpar_spread}.  Within each of the four subplots, the five lines correspond to five different values of the cap $\Gamma$, ranging from very lenient to very strict.  We immediately observe in Figure \ref{fig:casestudy1} the seasonality in spread prices caused by the seasonality in power demand.  This is most striking for the low efficiency cases (high $h_v,e_v$), as such plants would rarely be used in shoulder months, particularly in the case of gas. For low efficiency plants, the relationship with cap level (and corresponding initial allowance price) is as one would expect: a stricter cap greatly increases the value of gas plants and greatly decreases the value of the dirtier coal plants.    This is also true for high efficiency gas plants, although the price difference (in percentage terms) for different $\Gamma$ is less, since these are effectively `in-the-money' options, unlike those discussed above. However, the analysis becomes more complicated for high efficiency coal plants, which tend to be chosen to run in most market conditions, irrespective of emissions markets.  Interestingly, we find that for these options the relationship with $\Gamma$ (and hence $A_0$) can be non-monotonic under certain conditions, particularly for high levels of demand, when the price is set near the very top of the stack.  In such cases a stricter cap provides extra benefit for the cleaner coal plants via higher power prices (typically set by the dirtier coal plants on the margin) which outweighs the disadvantage of coal plants being replaced by gas plants in the merit order.

\subsection{Case Study II:  Impact of Fuel Price Changes}

Notice that in Table \ref{tab:numpar_fuel}, the initial conditions of both gas and coal have been set to be equal to their long term median levels.  We now consider the case that gas price $s^g_0$ is either above or below its long term level, thus inducing a change in the initial merit order.  Given the record low prices of under \$2 recently witnessed in the US natural gas market (due primarily to shale gas discoveries), it is natural to ask how such fuel price variations affect our spread option results.  Note however that since $\eta_c=\eta_g=1.5$ (implying a typical mean reversion time of 8 months), by the end of the trading period, the simulated fuel price distributions will again be centred near their mean reversion levels.  Thus in this case study, we capture a temporary, not permanent, shift in fuel prices.  

\begin{figure}
  \centering
 \includegraphics[width=0.45\textwidth]{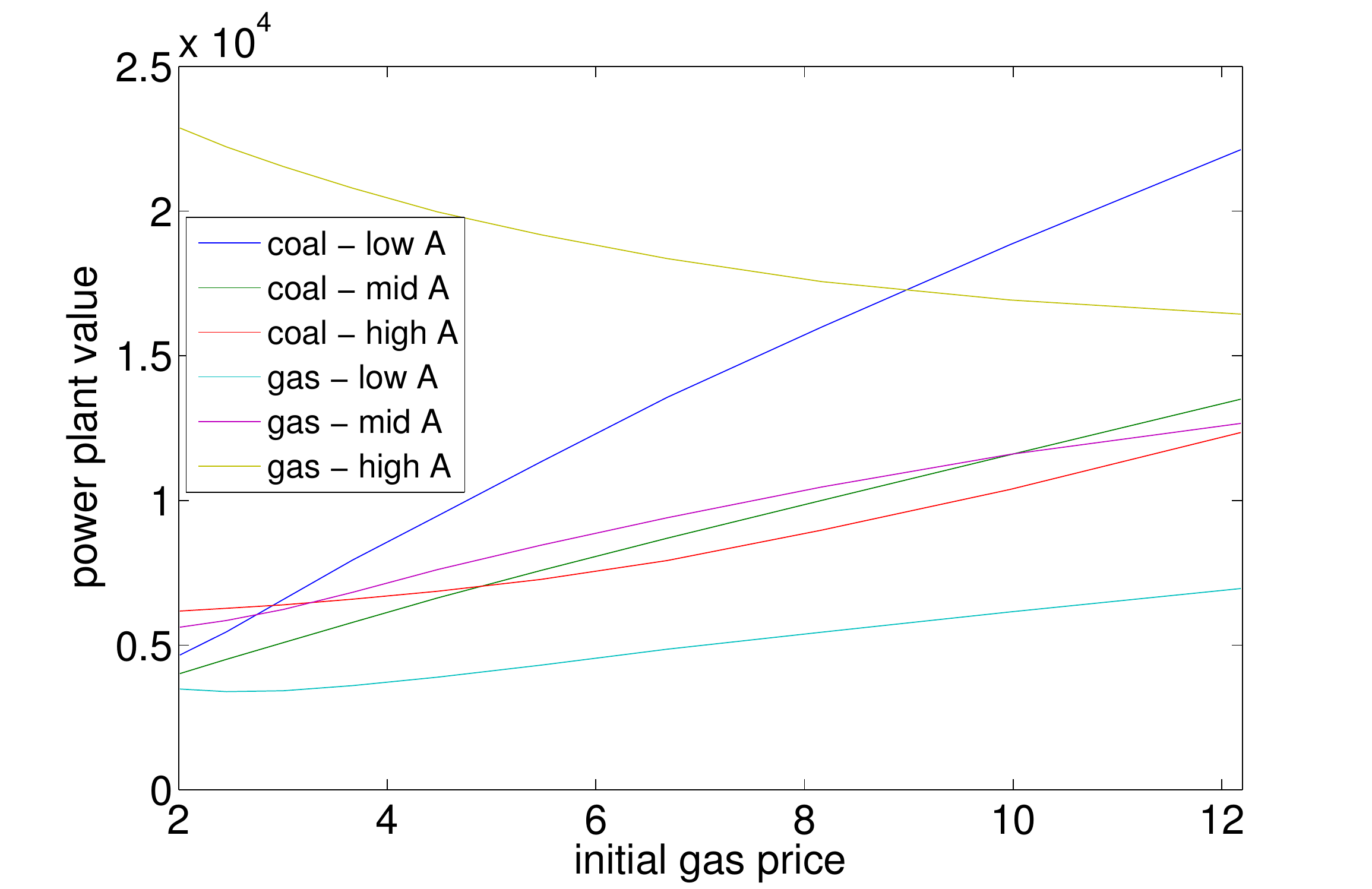}
 \hfill
 \includegraphics[width=0.45\textwidth]{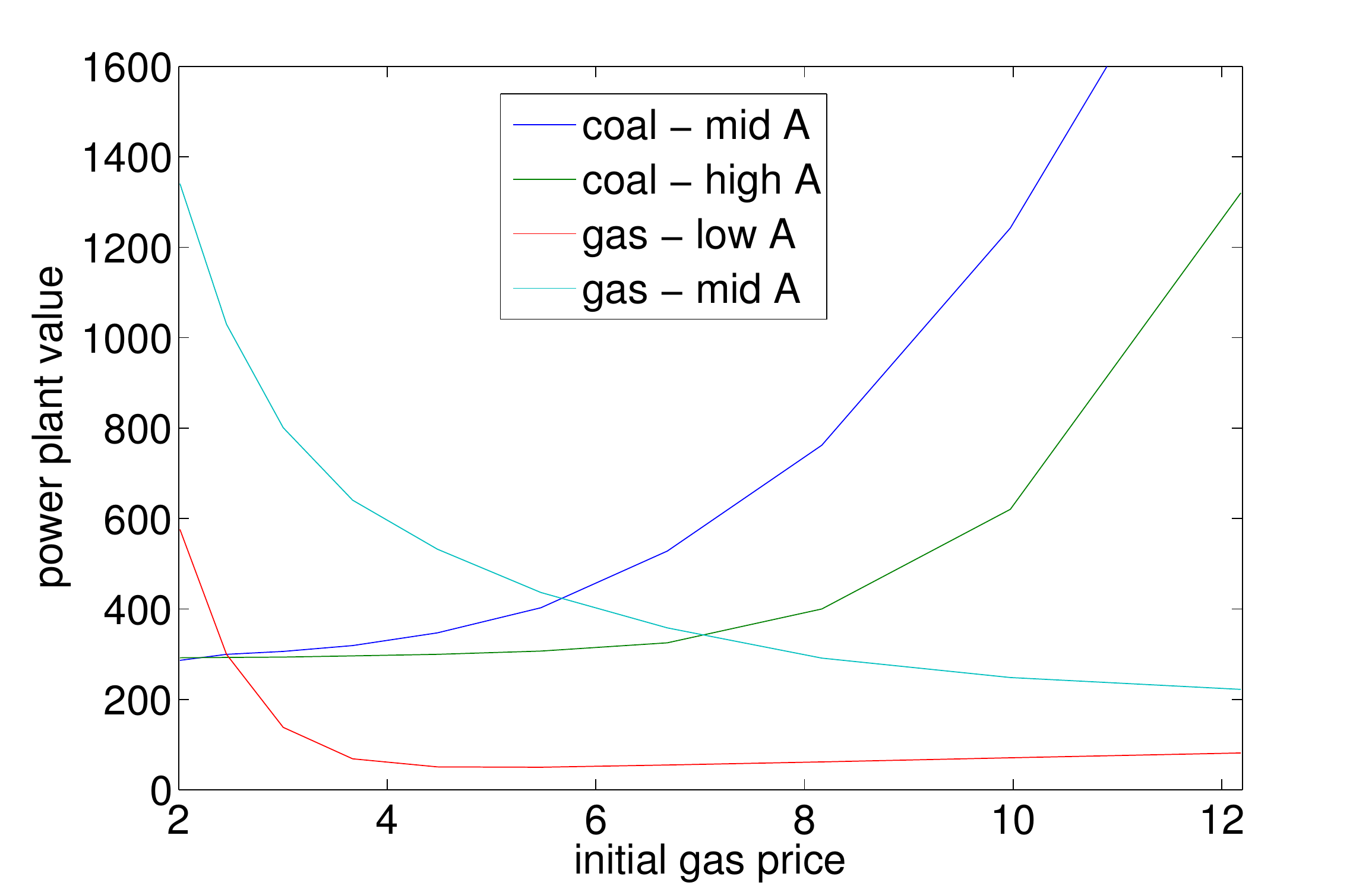}
  \caption{Power Plant Value (sum of spreads over $\tau$) versus $s_0^g$ for high efficiency (left) and low efficiency (right). `High A' corresponds to $\Gamma=1e+08$, `Mid A' to $\Gamma=1.4e+08$ (base case) and `Low A' to $\Gamma=1.8e+08$, with corresponding values $A_0=94,A_0=52,A_0=5$} 
 \label{fig:casestudy2}
\end{figure}
In Figure \ref{fig:casestudy2}, we plot the value of coal and gas power plants, as given by the sum of spread options of all maturities $\tau\in[0,T]$.  In the first plot, we consider high efficiency (low $h_v$ and $e_v$) plants, while in the second we consider low efficiencies.  The former are much more likely to operate each day and to generate profits, and are hence much more valuable than the latter.  However, they also show different relationships with $s^g_0$, as illustrated for several different cap levels $\Gamma$ (like in Case Study I above) which correspond to high, low or medium (base case) values of $A_0$.  Firstly, for low efficiency plants (right plot), we observe that gas plant value is typically decreasing in $s^g_0$, as we expect, since higher gas prices tend to push the bids from gas above those from coal, meaning there is less chance that the gas plant will be used for electricity generation.  Similarly, coal plant values are typically increasing in $s^c_0$, as more coal plants will be used.\footnote[2]{In this plot, the cases `coal - low A' and `gas - high A' are not included as their values are much greater and hence cannot be shown conveniently on the same axis.}  Note however, that for some cases, the curves flatten out, as no more merit order changes are possible.  This is particularly true for the coal plant when $A_0$ is very high (and hence once gas drops below a certain point, the coal plant is almost certainly going to remain more expensive than all gas plants) and for the gas plant when $A_0$ is very low (and hence once coal increases above a certain point, the gas plant is almost certainly going to remain more expensive than all coal plants). 

We now turn our attention to the high efficiency case (left plot), meaning the relatively cheap and clean plants for each technology.  As expected, coal benefits from low values of $A_0$ (ie, a lenient cap) and gas from high values of $A_0$ (ie, a strict cap).  On the other hand, the relationship with $s_0^g$ is now increasing for almost all six cases plotted except that of a gas plant with high $A_0$.  While it may seem surprising that for low or medium values of $A_0$, the gas plant value increase with $s^g_0$, this is quite intuitive when one considers that the range of bids from gas generators widens as $s^g_0$ increases, implying that the efficient plants can make a larger profit when the inefficient plants set the power price.  Indeed, as demand is quite high on average, and gas is 60 percent of the market, it is likely that these efficient gas plants will almost always be `in-the-money' even if coal is lower in the stack.  Only in the case that coal is typically above gas and now marginal (i.e. the high $A_0$ case) is the value of the gas plant decreasing in $s^g_0$ since the plant's profit margins shrink as gas and coal bids converge.

\subsection{Case Study III:  Comparison with Reduced-Form}

The second analysis we consider is to compare the results of our structural model for the allowance price, with two other simpler models, both of which belong to the class of `reduced-form' models.  The first of these treats the allowance price itself as a simple Geometric Brownian Motion (with drift $r$ under $\mathbb{Q}$), and hence $A_\tau$ is lognormal at spread maturity, like $S^c_\tau$ and $S^g_\tau$. The second comparison treats the emissions process as a Geometric Brownian Motion (GBM), and retains the digital terminal condition $A_T=\pi \mathbb{I}_{\{E_T>\Gamma\}}$.  As the drift of $(E_t)$ is then simply a constant (chosen to match the initial value $A_0$ in the full model), there is no feedback from $(A_t)$ on $(E_t)$, or in other words, no abatement induced by the allowance price.  For any time $t$, $A_t$ is then given in closed-form by a formula resembling the Black-Scholes digital option price.  In order to fully specify the reduced-form models, we need to choose volatility parameters $\sigma_a$ and $\sigma_e$ for each of the GBMs, as well as correlations $\rho_{ac},\rho_{ag}$ and $\rho_{ec},\rho_{eg}$ with the Brownian Motions driving the other exogenous factors, coal and gas prices.  All of these parameters are chosen to approximately match the levels of volatility and correlation produced by simulations in the full structural model, and are given in Table \ref{tab:numpar_GBMs}.  Finally, note that in all three models we compare, the power price is given by the same bid stack function as usual, so our aim is to isolate and evaluate the effect of our more sophisticated framework for the allowance price, in comparison to simpler approaches.  The cap throughout is $\Gamma=1.4e+08$, the base case.
\begin{table}
 \begin{center}
  \begin{tabular}{cccccc}
\hline
    $\sigma_a$ & $\rho_{ac}$ & $\rho_{ag}$ & $\sigma_e$ & $\rho_{ec}$ & $\rho_{eg}$ \\
\hline
     0.6 & -0.2  & 0.4 &  0.006 &   -0.2 & 0.2 \\
\hline
  \end{tabular}
 \end{center}
 \caption{Parameters for reduced-form comparisons, treating $A_t$ and $E_t$ as GBMs.}
 \label{tab:numpar_GBMs}
\end{table}

\begin{figure}
  \centering
  \includegraphics[width=\textwidth]{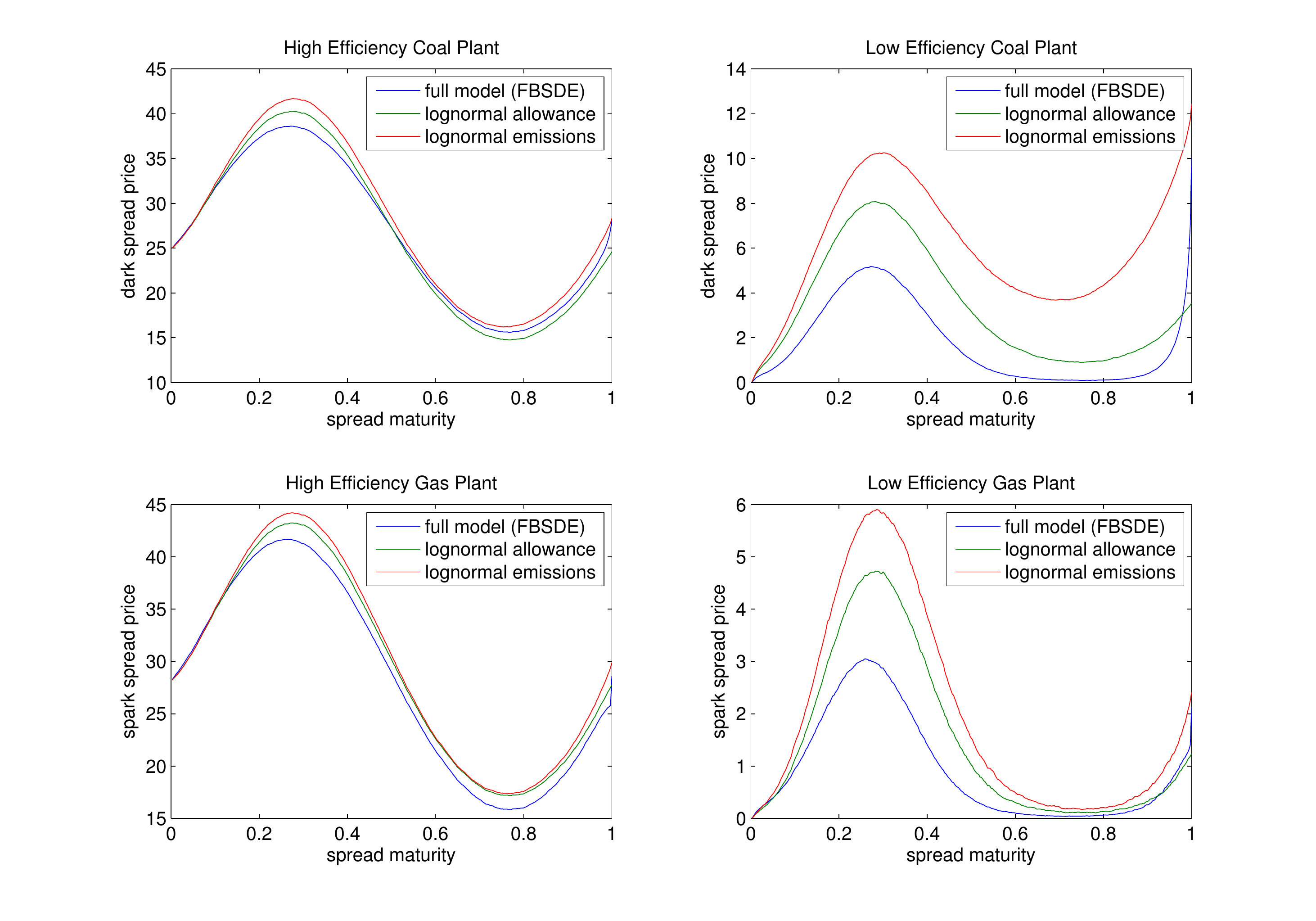}
  \caption{Model comparison against reduced-form:  Spark and dark spread option values for varying heat rates, emissions rates and maturities.} 
 \label{fig:casestudy3}
\end{figure}
Figure \ref{fig:casestudy3} reveals that the difference between the reduced form models and the full structural model is relatively small for high efficiency gas and coal plants which are typically `in-the-money'.  In contrast a larger gap appears for low efficiency cases, where the reduced form models significantly overprice spread options relative to the stack model.  In particular, the case of lognormal emissions produces much higher prices, especially for dark spreads.  The intuition is as follows.  In the full model, the bid stack structure automatically leads to lower emissions when the allowance price is high, and higher emissions when the allowance price is low, producing a mean-reversion-like effect on the cumulative emissions, keeping the process moving roughly towards the cap, with the final outcome (compliance or not) in many simulations only becoming clear very close to maturity.  In contrast, if $E_t$ is a GBM, much of the uncertainty is often resolved early in the trading period, with $A_t$ then sticking near zero or $\pi$ for much of the period.  In such cases, there is a much larger benefit for deep OTM options (low efficiency plants), for which the tails of the allowance price distribution provide great value either for coal (when the price is near zero) or for gas (when the price is near the penalty).  We observe that in some of the subplots (particularly low efficiency coal), this extra benefit is indeed realized in the full model, but only very near the end of the trading period when the volatility of $(A_t)$ spikes, and the process either rises or falls sharply.  In contrast, for the other reduced-form model with lognormal $(A_t)$, the volatility of the allowance price is constant throughout and $A_t$ never moves rapidly towards zero or the penalty. However, the overall link with fuel and power prices is much weaker when simply using correlated Brownian Motions, which serves to widen the spread distribution in most cases relative to the full structural model.  This result is somewhat similar to the observation in \cite{rCarmona2012} that a stack model generally produces lower spread option prices than Margrabe's formula for correlated lognormals.

\subsection{Case Study IV: Cap-and-Trade vs. Carbon Tax}\label{str:CATvsCT}

Finally, we wish to investigate the implications of the model for cap-and-trade systems, as compared with fixed carbon taxes.  This question has been much debated by policy makers as well as academics, and can be roughly summarized as fixing quantity versus fixing price.  In \cite{CarmonaFehrHinzPorchet}, several different designs for cap-and-trade systems are compared to a carbon tax, using criteria such as cost to society and windfall profits to power generators.  Here we follow a related approach by analyzing the power sector as a whole, but we build on our previous case studies by using spread option prices as a starting point.  Firstly we observe that the total expected discounted profits of the power sector are equal to the value of all the power plants implied by the bid stack structure, which in turn equals a portfolio of (or integral over) sums of spread option prices with varying $h_v$ and $e_v$. i.e, for each simulation over the period $[0,T]$, total profits (total revenues minus total costs) are\footnote[3]{Note that we do not consider here additional issues such as whether allowances are auctioned or freely allocated to generators.  Instead, we assume that allowances are bought on the market by generators as and when they need them.} 
\begin{eqnarray*}
\text{Total Profits } &=& \sum_{\tau \in [0,T]} \left( P_\tau D_\tau - \int_0^{D_\tau} b(x,A_\tau,S_\tau) \text{d}x \right) \\
&=& \sum_{\tau \in [0,T]} \int_0^{\bar{x}}\left(  P_\tau  - b(x,A_\tau,S_\tau) \right)^+\text{d}x  \\
&=& \sum_{\tau \in [0,T]} \left(\int_0^{\bar{x}_c}\left(  P_\tau  - h_c(x) S_\tau^c - e_c(x) A_\tau \right)^+\text{d}x + \int_0^{\bar{x}_g}\left(  P_\tau  - h_g(x) S_\tau^g - e_g(x) A_\tau \right)^+\text{d}x \right), 
\end{eqnarray*}
where the second line follows from the fact that the events $\{P_\tau \ge b(x,A_\tau,S_\tau)\}$ and $\{D_\tau \ge x\} $ are equal.

Hence, instead of picking particular coal and gas plants with efficiencies specified by the parameters in Table \ref{tab:numpar_spread}, we now integrate power plant value over all the efficiencies of plants in the stack, as defined by the parameters in Table \ref{tab:numpar_stack}.  For the case of the carbon tax, we simply force $A_t=A_0\exp(rt)$ for all $t\in[0,T]$, including the exponential function in order to match the mean of the process in the cap-and-trade model.  This is equivalent to setting the volatility $\sigma_a$ equal to zero in the GBM model for the allowance price in Case Study III. 

\begin{figure}
  \centering
\includegraphics[width=\textwidth]{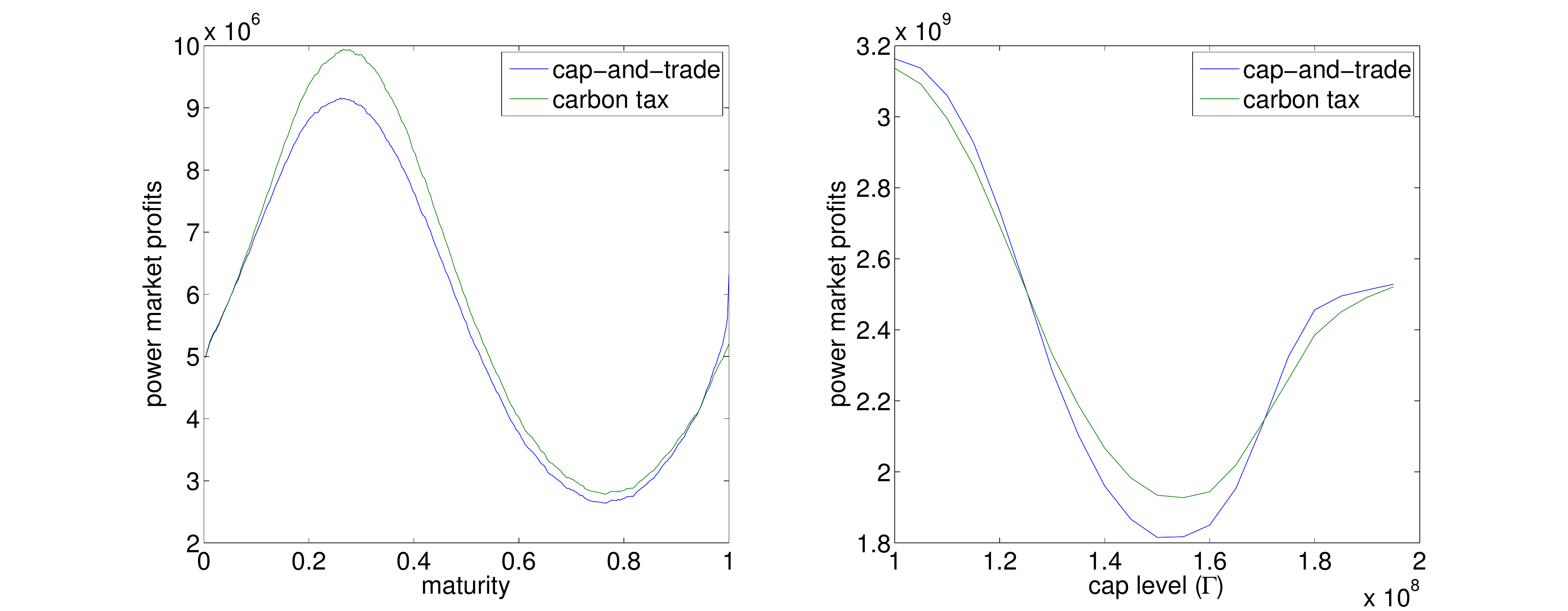}
\caption{Cap-and-trade vs. carbon tax: Power sector profits versus time for the `base case' (left);  Total profits over one year for equally-spaced cap values from $1e+08$ to $1.95e+08$ tons of CO$_2$. (right)} 
 \label{fig:casestudy4}
\end{figure}

In Figure \ref{fig:casestudy4}, we first plot the expected total market profits in the base case as a function of time.  It is interesting to observe that two important effects occur, pulling the profits in opposite directions, but varying in strength over the trading period.  In particular, although the profits must be equal at time zero, a gap quickly appears in the early part of the trading period, with expected profits to power generators significantly higher under a carbon tax than cap-and-trade. However, as maturity approaches, the gap narrows and the order reverses over the final days, as cap-and-trade generates higher expected profits.  These effects can be understood with a little thought.  Firstly, as $A_0=52$ in the base case, the bids of coal and gas begin the period at very similar levels, a state which generally keeps profits low, since the variance of electricity prices is low and the profit margins of both coal and gas generators are quite low.  As time progresses and fuel prices move, the coal and gas bids will tend to drift apart in most simulations, for example with gas sometimes moving above coal, say.  However, in our structural model for the cap-and-trade scheme, in such a case the higher emissions will induce a higher allowance price, and in turn a feedback effect due to the coupling in \eqref{eq:allowance_FBSDE}, which acts to keep coal and gas bids closer together.  A similar argument can be made for the case of gas bids tending to move below coal bids but then being counteracted by lower allowance prices.  Again we see that the power market structure induces mean reversion on $(E_t)$, which in this scenario (of an averagely strict cap) corresponds to keeping coal and gas bids close together.  On the other hand, under a carbon tax with fixed (or deterministic) $A_t$, there is of course no feedback mechanism (price-sensitive abatement), and bids tend to wander apart.  However, as the end of the trading period approaches, in the cap-and-trade system the allowance price eventually gets pulled to either zero or $\pi$, which will separate the bids in one way or the other, either leading to very large profits for coal plants (if $A_T=0$) or for gas plants (if $A_T=\pi$).  This is a similar effect to that discussed when comparing with a lognormal allowance price in Case Study III, as neither a carbon tax nor a lognormal allowance price model sees the extra volatility near maturity caused by the terminal condition.

Finally, in the second plot of Figure \ref{fig:casestudy4}, we consider how these conclusions change if the cap is made stricter or more lenient.  Instead of plotting against maturity, we consider the total profits of the power sector over the entire period $[0,T]$.  Firstly, we observe that under both forms of emissions regulation, power sector profits are lowest if the cap is chosen close to base case, under which the bids from coal and gas generators are more tightly clustered together.  Secondly, it is important to notice that the conclusion in the previous discussion that a carbon tax provides more profits to the power sector does not hold for all scenarios of the cap.  In particular, for either very high or very low values of the cap, the cap-and-trade scheme provides more profits than a tax.  The explanation here is that for the automatic abatement mechanism in the stack to have its largest impact (keeping bids together, and emissions heading towards the cap), there needs to be significant uncertainty at time zero as to whether the cap will be reached.  The feedback mechanism of a cap-and-trade system then allows this uncertainty to be prolonged through the period.  On the other hand, for an overly strict or overly lenient cap (or similarly for a merit order which does not allow for much abatement), the second effect discussed above dominates over the first.  In other words, the terminal condition which guarantees large profits to either coal or gas at maturity begins to take precedence earlier in the trading period, instead of just before maturity as in the base case.  Although in practice there are many other details to consider when comparing different forms of emissions legislation, our stylized single-period model sheds some light on the differences between cap-and-trade and carbon tax, as well as the clear importance of choosing an appropriate cap level. 
 
\section{Conclusion}

As policy makers debate the future of global carbon emissions legislation, the existing cap-and-trade schemes around the world have already significantly impacted the dynamics of electricity prices and the valuation of real assets, such as power plants, particularly under the well-known European Union Emissions Trading Scheme.  Together with the recent volatile behaviour of all energy prices (e.g., gas, coal, oil), the introduction of carbon markets has increased the risk of changes in the merit order of fuel types, known to be a crucial factor in the price setting mechanism of electricity markets.  In the US, the recent sharp drop in natural gas prices is already causing changes in the merit order, which would be further magnified by any new emissions regulation.  Such considerations are vital for describing the complex dependence structure between electricity, its input fuels, and emissions allowances, and thus highly relevant for both market participants and policy makers designing emissions trading schemes.  In this paper, we derived the equilibrium carbon allowance price as the solution of an FBSDE, in which feedback from allowance price on market emission rates is linked to the electricity stack structure.  The resulting model specifies simultaneously both electricity and allowance price dynamics as a function of fuel prices, demand and accumulated emissions; in this way, it captures consistently the highly state-dependent correlations between all the energy prices, which would not be achievable in a typical reduced-form approach. We used a PDE representation for the solution of the pricing FBSDE and implemented a finite difference scheme to solve for the price of carbon allowances.  Finally we compared our model for allowance prices with other reduced-form approaches and analysed its important implications on price behaviour, spread option pricing and the valuation of physical assets in electricity markets covered by emissions regulation.  The four case studies illustrated the many important considerations needed to understand the complex joint dynamics of electricity, emissions and fuels, as well as the additional insight that can be provided by our structural approach.  

\begin{appendix}
 
\section{Numerical Solution of the FBSDE}\label{ap:num_soln_FBSDE}

\subsection{Candidate Pricing PDE}
The construction of a solution to the FBSDE \ref{eq:allowance_FBSDE} was done in Theorem \ref{th:existence} by means of a decoupling random field $u$ representing the solution in the form $A_t=u(t,E_t)$. The existence of this random field was derived from the results of \cite{jMa2011}, and given its uniqueness and the Markov nature of
FBSDE \ref{eq:allowance_FBSDE}, it is possible to show that $u$ is in fact a function of $D_t$ and $S_t$, so that $A_t$ is in fact of the form $A_t=\alpha(t,D_t,E_t, S^c_t,S^g_t)$ for some deterministic function $\alpha:[0,T]\times[0,\bar{x}]\times\R_{++}\times[0,\bar{e}]\hookrightarrow [0,\pi]$. Standard arguments in the theory of FBSDEs show that this $\alpha$ is a viscosity solution of the semilinear PDE:
\begin{align}\label{eq:allowance_PDE}
 \mathcal{L}\alpha + \mathcal{N}\alpha &= 0, &&\text{on } U_T\\
 \alpha &= \phi(e), &&\text{on } \{t=T\}\times U,
\end{align}
where $U:=(0,\bar{x})\times\R_{++}\times\R_{++}\times(0,\bar{e})$ and $U_T:=[0,T)\times U$; the operators $\mathcal{L}$ and $\mathcal{N}$ are defined by
\begin{multline*}
 \mathcal{L}:= \frac{\partial }{\partial t} + \frac{1}{2}\sigma_d(d)^2\frac{\partial^2 }{\partial d^2} + \frac{1}{2}\sigma_c(s_c)^2\frac{\partial^2 }{\partial s_c^2} + \frac{1}{2}\sigma_g(s_g)^2\frac{\partial^2 }{\partial s_g^2}\\
 + \mu_d(t,d)\frac{\partial }{\partial d} + \mu_c(s_c)\frac{\partial }{\partial s_c} + \mu_g(s_g)\frac{\partial }{\partial s_g} - r\cdot
\end{multline*}
and $\mathcal{N}:= \mu_e(d,\cdot,(s_c,s_g))\frac{\partial }{\partial e}$. As previously, we specify for our purposes that  $\phi(e)= \pi I_{[\Gamma,\infty)}(e)$, for $e\in\R$.

With regards to the problem \eqref{eq:allowance_PDE} the question arises at which parts of the boundary we need to specify boundary conditions and, given the original stochastic problem \eqref{eq:allowance_FBSDE}, of what form these conditions should be. To answer the former question we consider the Fichera function $f$ at points of the boundary where one or more of the diffusion coefficients disappear (cf. \cite{oOleinik1973}). Defining $n:=(n_d,n_c,n_g,n_e)$ to be the inward normal vector to the boundary, Fichera's function for the operator $(\mathcal{N}+\mathcal{L})$ reads
\begin{multline}\label{eq:fichera}
 f(t,d,s_c,s_g,e) := \left(\mu_d-\frac{1}{2}\frac{\partial }{\partial d}\sigma_d^2\right)n_d + \left(\mu_c-\frac{1}{2}\frac{\partial }{\partial s_c}\sigma_c^2 - \frac{\partial }{\partial s_c}\rho \sigma_c\sigma_g\right)n_c\\
 + \left(\mu_g-\frac{1}{2}\frac{\partial }{\partial s_g}\sigma_g^2 - \frac{\partial }{\partial s_g}\rho \sigma_c\sigma_g\right)n_g + \mu_en_e, \quad \text{on } \partial U_T.
\end{multline}
At points of the boundary where $f\geq 0$ the direction of information propagation is outward and we do not need to specify any boundary conditions; at points where $f<0$ information is inward flowing and boundary conditions have to be specified. We evaluate \eqref{eq:fichera} for the choice of coefficients presented in \textsection \ref{str:stoch_proc}.

Considering the parts of the boundary corresponding to $d=0$ and $d=\bar{x}$, we find that $f\geq 0$ if and only if $\min(\bar{D}(t),\bar{x}-\bar{D}(t))\geq \bar{x}\hat{\sigma}$, which is the same condition prescribed in \textsection \ref{str:dem_proc} to guarantee that the Jacobi diffusion stays within the interval $(0,\bar{x})$. At points of the boundary corresponding to $e=0$, we find that $f\geq 0$ always. On the part of the boundary on which $e=\bar{e}$, $f< 0$ except at the point $(d,\cdot,\cdot,e)=(0,\cdot,\cdot,\bar{e})$, where $f=0$, an ambiguity which could be resolved by smoothing the domain. Similarly, we find that $f\geq 0$ on parts of the boundary where $s_c=0$ or $s_g=0$. Therefore, no boundary conditions are necessary except when $e=\bar{e}$, where we prescribe
\begin{equation}\label{eq:bc_emax}
\alpha=\exp(-r(T-t))\pi, \quad \text{on } U_T|_{e=\bar{e}}.
\end{equation}
In addition we need to specify an asymptotic condition for large values of $s_c$ and $s_g$. We choose to consider solutions that, for $i\in\{c,g\}$, satisfy
\begin{equation}\label{eq:bc_smax}
\frac{\partial \alpha}{\partial s_i}\sim 0, \quad \text{on } U_T|_{s_i\to \infty}.
\end{equation}

%

\subsection{An Implicit - Explicit Finite Difference Scheme}
\label{ap:FinDiff}
We approximate the domain $\bar{U}_T$ by a finite grid spanning $[0,T]\times[0,\bar{x}]\times[0,\bar{s}_c]\times[0,\bar{s}_g]\times[0,\bar{e}]$. For the discretization we choose mesh widths $\Delta d$, $\Delta s_c$, $\Delta s_g$, $\Delta e$ and a time step $\Delta t$. The discrete mesh points $(t_k,d_m,s_{c_i},s_{g_j},e_n)$ are then defined by
\begin{align*}
t_k &:= k\Delta t, & d_m &:= m\Delta d,\\
s_{c_i} &:= i\Delta s_c, & s_{g_j} &:= j\Delta s_g, & e_n &:= n\Delta e.
\end{align*}
The finite difference scheme we employ produces approximations $\alpha^k_{m,i,j,n}$, which are assumed to converge to the true solution $\alpha$ as the mesh width tends to zero.

Since the partial differential equation \eqref{eq:allowance_PDE} is posed backwards in time with a terminal condition, we choose a backward finite difference for the time derivative. In order to achieve better stability properties we make the part of the scheme relating to the linear operator $\mathcal{L}$ implicit; the part relating to the operator $\mathcal{N}$ is made explicit in order to handle the nonlinearity. 

In the $e$-direction we are approximating a conservation law PDE with discontinuous terminal condition. (For an in depth discussion of numerical schemes for this type of equation see \cite{rLeVeque1990}) The first derivative in the $s$-direction, relating to the nonlinear part of the partial differential equation, is discretised against the drift direction using a one-sided upwind difference. Because characteristic information is propagating in the direction of decreasing $e$, this one-sided difference is also used to calculate the value of the approximation on the part of the boundary corresponding to $e=0$. At the part of the boundary corresponding to $e=\bar{e}$ we apply the condition \eqref{eq:bc_emax}.

In the $d$-direction the equation is elliptic everywhere except on the boundary, where it degenerates. Therefore, we expect the convection coefficient to be much larger than the diffusion coefficient near the boundaries. In order to keep the discrete maximum principle we again use a one-sided upwind difference for the first order derivative. Thereby we have to pay attention that due to the mean-reverting nature of $(D_t)$ the direction of information propagation and therefore the upwind direction changes as the sign of $\mu_d$ changes. The same upwind difference is also used to calulate the value of the approximation at the boundaries $d=0$ and $d=\bar{x}$. To discretize the second order derivative we use central differences.

The $s_c$ and $s_g$-direction are treated similarly to the $d$-direction. We use one-sided upwind differences for the first order derivatives, thereby taking care of the boundaries corresponding to $s_c=0$ and $s_g=0$. The second order derivatives are discretized using central differences. At the boundary corresponding to $s_c=\bar{s}_c$ and $s_g=\bar{s}_g$ we apply the asymptotic condition \eqref{eq:bc_smax} as a boundary condition.

With smooth boundary data, on a smooth domain, the scheme described above can be expected to exhibit first order convergence. In our setting, we expect the discontinuous terminal condition to have adverse effects on the convergence rate.

\section{Numerical Calculation of Spread Prices}\label{ap:num_soln_spread}

\subsection{Time Discretisation of SDEs}
Let $(D_k,S^c_k,S^g_k,E_k,A_k)$ denote the discrete time approximation to the FBSDE solution $(D_t,S^c_t,S^g_t,E_t,A_t)$ on the time grid $0 < \Delta t < 2\Delta t < \ldots < n_k\Delta t = \tau$. At each time step we calculate $A_k$ by interpolating the discrete approximation $\alpha^k_{m,i,j,n}$ at $(D_k,S^c_k,S^g_k,E_k)$, beginning with the initial values $D_0=d_0, S^c_0 = s^c_0, S^g_0 = s^g_0, E_0=0$. The approximations $(D_k,S^c_k,S^g_k,E_k)$ are obtained using a simple Euler scheme (cf. \cite{pGlasserman2004}). The discretized version of $(D_t)$ is forced to be instantaneously reflecting at the boundaries $D_k=0$ and $D_k=\bar{x}$; similarly, the discretized versions of $(S^c_t)$ and $(S^g_t)$ are made instantaneously reflecting at $S^c_k=0$ and $S^g_k=0$.

\subsection{Monte Carlo Calculation of Option Prices}
Using this discretization we simulate $n_{mc}$ paths and, as usual, for $t\in[0,\tau)$, calculate the mean spark spread price $\hat{V}_t$, given by
\begin{equation*}
\hat{V}_t:=\exp(-r(\tau-t))\frac{1}{n_{mc}}\sum_{i=1}^{n_{mc}} \left(b(D^i_{n_k},S^{c,i}_{n_k},S^{g,i}_{n_k},A^i_{n_k})-h_vS^{g,i}_{n_k}- e_vA^i_{n_k}\right)^+,
\end{equation*}
where the index $i$ refers to the simulation scenario. The corresponding standard error $\hat{\sigma}_{v}$ is obtained by
\begin{equation*}
\hat{\sigma}_{v}:= \sqrt{\frac{1}{n_{mc}\left(n_{mc}-1\right)}\sum_{i=1}^{n_{mc}}\left(V^i_{n_k}-\hat{V}_{\tau}\right)^2}.
\end{equation*}
\end{appendix}

\addcontentsline{toc}{chapter}{Bibliography}
\renewcommand{\bibname}{References}

\bibliography{CleanSpread_References.bib}

\newcommand{\bibfont}{\fontsize{10}{12}\selectfont} \newcommand{\noopsort}[1]{}
  \newcommand{\printfirst}[2]{#1} \newcommand{\singleletter}[1]{#1}
  \newcommand{\switchargs}[2]{#2#1}
\begin{thebibliography}{}

\bibitem[A\"{i}d et~al., 2012]{rAid2012}
A\"{i}d, R., Campi, L., and Langren\'{e}, N. (2012).
\newblock A structural risk-neutral model for pricing and hedging power
  derivatives.
\newblock {\em Mathematical Finance}.
\newblock to appear.

\bibitem[Alos et~al., 2011]{eAlos2011}
Alos, E., Eydeland, A., and Laurence, P. (2011).
\newblock A {K}irk’s and a {B}achelier’s formula for three-asset spread
  options.
\newblock {\em Energy Risk}, pages 52--57.

\bibitem[Barlow, 2002]{mBarlow2002}
Barlow, M. (2002).
\newblock A diffusion model for electricity prices.
\newblock {\em Mathematical Finance}, 12:287--298.

\bibitem[Benth et~al., 2008]{fBenth2008}
Benth, F., Benth, J., and Koekebakker, S. (2008).
\newblock {\em Stochastic Modeling of Electricity and Related Markets},
  volume~11 of {\em Advanced Series in Statistical Science \& Applied
  Probability}.
\newblock World Scientific.

\bibitem[Carmona and Coulon, 2012]{CarmonaCoulon}
Carmona, R. and Coulon, M. (2012).
\newblock A survey of commodity markets and structural models for electricity
  prices.
\newblock In Benth, F.~E., editor, {\em Financial Engineering for Energy Asset
  Management and Hedging in Commodity Markets; Proceedings from the special
  thematic year at the Wolfgang Pauli Institute, Vienna}.

\bibitem[Carmona et~al., 2012a]{rCarmona2012}
Carmona, R., Coulon, M., and Schwarz, D. (2012a).
\newblock Electricity price modelling and asset valuation: A multi-fuel
  structural approach.
\newblock Working paper, Princeton University, University of Oxford, Oxford-Man
  Institute.

\bibitem[Carmona and Delarue, 2012]{CarmonaDelarue}
Carmona, R. and Delarue, F. (2012).
\newblock Singular fbsdes and scalar conservation laws driven by diffusion
  processes.
\newblock Technical report.

\bibitem[Carmona et~al., 2012b]{CDET}
Carmona, R., Delarue, F., Espinoza, G., and Touzi, N. (2012b).
\newblock Singular forward backward stochastic differential equations and
  emission derivatives.
\newblock {\em Annals of Applied Probability}.

\bibitem[Carmona and Durrleman, 2003]{CarmonaDurrleman}
Carmona, R. and Durrleman, V. (2003).
\newblock Pricing and hedging spread options.
\newblock {\em SIAM Rev.}, 45(4):627--685.

\bibitem[Carmona et~al., 2010]{CarmonaFehrHinzPorchet}
Carmona, R., Fehr, F., Hinz, J., and Porchet, A. (2010).
\newblock Market designs for emissions trading schemes.
\newblock {\em SIAM Review}, 52:403--452.

\bibitem[Carmona and Hinz, 2011]{CarmonaHinz}
Carmona, R. and Hinz, J. (2011).
\newblock Risk neutral modeling of emission allowance prices and option
  valuation.
\newblock Technical report.

\bibitem[Carmona and Sun, 2012]{CarmonaSun}
Carmona, R. and Sun, Y. (2012).
\newblock Implied and local correlations from spread options.
\newblock {\em Applied Mathematical Finance}.

\bibitem[Cartea and Villaplana, 2008]{aCartea2008}
Cartea, A. and Villaplana, P. (2008).
\newblock Spot price modeling and the valuation of electricity forward
  contracts: the role of demand and capacity.
\newblock {\em Journal of Banking and Finance}, 32:2501--2519.

\bibitem[Chesney and Taschini, 2012]{ChesneyTaschini}
Chesney, M. and Taschini, L. (2012).
\newblock The endogenous price dynamics of the emission allowances: An
  application to co2 option pricing.
\newblock {\em Applied Mathematical Finance}.

\bibitem[Coulon, 2009]{mCoulon2009}
Coulon, M. (2009).
\newblock {\em Modelling price dynamics through fundamental relationships in
  electricity and other energy markets}.
\newblock PhD thesis, University of Oxford.

\bibitem[Coulon and Howison, 2009]{mCoulon2009a}
Coulon, M. and Howison, S. (2009).
\newblock Stochastic behaviour of the electricity bid stack: from fundamental
  drivers to power prices.
\newblock {\em Journal of Energy Markets}, 2:29--69.

\bibitem[De~Jong and Schneider, 2009]{deJongSchneider}
De~Jong, C. and Schneider, S. (2009).
\newblock Cointegration between gas and power spot prices.
\newblock {\em Journal of Energy Markets}, 2(3):27--46.

\bibitem[Eydeland and Wolyniec, 2003]{aEydeland2003}
Eydeland, A. and Wolyniec, K. (2003).
\newblock {\em Energy and Power Risk Management: New Developments in Modeling,
  Pricing and Hedging}.
\newblock John Wiley \& Sons.

\bibitem[Forman and S{\o}rensen, 2008]{jForman2008}
Forman, J.~L. and S{\o}rensen, M. (2008).
\newblock The {P}earson diffusions: A class of statistically tractable
  diffusion processes.
\newblock {\em Scandinavian Journal of Statistics}, 35:438--465.

\bibitem[Glasserman, 2004]{pGlasserman2004}
Glasserman, P. (2004).
\newblock {\em Monte Carlo Methods in Financial Engineering}.
\newblock Springer.

\bibitem[Howison and Schwarz, 2012]{sHowison2012}
Howison, S. and Schwarz, D. (2012).
\newblock Risk-neutral pricing of financial instruments in emission markets: A
  structural approach.
\newblock Working paper, University of Oxford, Oxford-Man Institute.

\bibitem[Karatzas and Shreve, 1999]{iKaratzas1999}
Karatzas, I. and Shreve, S.~E. (1999).
\newblock {\em Brownian Motion and Stochastic Calculus}.
\newblock Springer.

\bibitem[Kobylanski, 2000]{mKobylanski2000}
Kobylanski, M. (2000).
\newblock Backward stochastic differential equations and partial differential
  equations with quadratic growth.
\newblock {\em The Annals of Probability}, 28:558--602.

\bibitem[Koenig, 2011]{pKoenig2011}
Koenig, P. (2011).
\newblock Modelling correlation in carbon and energy markets.
\newblock Working paper, University of Cambridge, Department of Economics.

\bibitem[LeVeque, 1990]{rLeVeque1990}
LeVeque, R.~J. (1990).
\newblock {\em Numerical Methods for Conservation Laws}.
\newblock Birkh\"{a}user.

\bibitem[Ma et~al., 2011]{jMa2011}
Ma, J., Wu, Z., Zhang, D., and Zhang, J. (2011).
\newblock On wellposedness of forward-backward sdes --- a unified approach.
\newblock working paper, University of Southern California.

\bibitem[Margrabe, 1978]{wMargrabe1978}
Margrabe, W. (1978).
\newblock The value of an option to exchange one asset for another.
\newblock {\em The Journal of Finance}, 33:177--186.

\bibitem[Oleinik and Radkevic, 1973]{oOleinik1973}
Oleinik, O.~A. and Radkevic, E. (1973).
\newblock {\em Second Order Equations with Nonnegative Characteristic Form}.
\newblock AMS.

\bibitem[Pirrong and Jermakyan, 2008]{cPirrong2008}
Pirrong, C. and Jermakyan, M. (2008).
\newblock The price of power: The valuation of power and weather derivatives.
\newblock {\em Journal of Banking and Finance}, 32:2520--2529.

\bibitem[Seifert et~al., 2008]{jSeifert2008}
Seifert, J., Uhrig-Homburg, M., and Wagner, M.~W. (2008).
\newblock Dynamic behaviour of $\text{CO}_2$ spot prices.
\newblock {\em Journal of Environmental Economics and Management},
  56(2):180--194.

\end{thebibliography}
\bibliographystyle{apalike}
\end{document}